\documentclass[11pt]{article}
 
\usepackage{multirow,microtype, algorithm, algpseudocode}

\usepackage[tmargin=1in,bmargin=1in,lmargin=1in,rmargin=1in]{geometry}


\bibliographystyle{plain}

\usepackage{enumitem}
\setlist[enumerate]{itemsep=0mm}
\setlist[itemize]{itemsep=0mm}

\usepackage{titlesec}

\titlespacing*{\section}
  {0pt}{0.6\baselineskip}{0.4\baselineskip}
\titlespacing*{\subsection}
  {0pt}{0.6\baselineskip}{0.4\baselineskip}

\title{Frameworks for Designing In-place Graph Algorithms}

\author{Sankardeep Chakraborty$^1$, Anish Mukherjee$^2$, Venkatesh Raman$^3$, Srinivasa Rao Satti$^4$ \\
\\
$^1$ National Institute of Informatics, Tokyo, Japan, sankardeep@nii.ac.jp \\
$^2$ Chennai Mathematical Institute, Chennai, India, anish@cmi.ac.in\\
$^3$ The Institute of Mathematical Sciences, HBNI, Chennai, India, vraman@imsc.res.in \\
$^4$ Seoul National University, Seoul, South Korea, ssrao@cse.snu.ac.kr
}
\date{}



\usepackage{graphicx}
\usepackage{amsfonts}
\usepackage{amssymb}
\usepackage{amsmath}

\usepackage{amsthm}
\theoremstyle{plain}
\newtheorem{theorem}{Theorem}

\newtheorem{claim}{Claim}
\newtheorem{lemma}{Lemma}
\newtheorem*{lemma*}{Lemma}
\newtheorem*{theorem*}{Theorem}

\newcommand{\Log}{\mbox{{\sf L}}}

\newcommand{\NL}{\mbox{{\sf NL}}}
\newcommand{\UL}{\mbox{{\sf UL}}}
\newcommand{\coUL}{\mbox{{\sf co-UL}}}

\newcommand{\NP}{\mbox{{\sf NP}}}

\newcommand{\rotated}{\mbox{{\sf rotate}}}
\newcommand{\implicit}{\mbox{{\sf implicit}}}

\newcommand{\white}{\mbox{{\sf white}}}
\newcommand{\gray}{\mbox{{\sf gray}}}
\newcommand{\black}{\mbox{{\sf black}}}
\newcommand{\grayone}{\mbox{{\sf gray1}}}
\newcommand{\graytwo}{\mbox{{\sf gray2}}}

\begin{document}

\maketitle

Read-only memory (ROM) model is a classical model of computation to study time-space tradeoffs of algorithms. One of the classical results on the ROM model is that any sorting algorithm that uses $O(s)$ words of extra space requires $\Omega (n^2/s)$ comparisons for
$ \lg n \leq s \leq n/\lg n$\footnote{We use $\lg$ to denote logarithm to the base $2$} and the bound has also been recently matched by an algorithm. However, if we relax the model (from ROM), we do have sorting algorithms (say Heapsort) that can sort using $O(n \lg n)$ comparisons using $O(\lg n)$ bits of extra space, even keeping a permutation of the given input sequence at any point of time during the algorithm.

We address similar questions for graph algorithms. We show that a simple natural relaxation of ROM model allows us to implement fundamental graph search methods like BFS and DFS more space efficiently than in ROM. By simply allowing elements in the adjacency list of a vertex to be permuted, we show that, on an undirected or directed connected graph $G$ having $n$ vertices and $m$ edges, the vertices of $G$ can be output in a DFS or BFS order using $O(\lg n)$ bits of extra space and $O(n^3 \lg n)$ time.
Thus we obtain similar bounds for {\it reachability} and {\it shortest path distance} (both for undirected and directed graphs). 
With a little more (but still polynomial) time, we can also output vertices in the {\it lex-DFS} order.
As reachability in directed graphs (even in DAGs) and shortest path distance (even in undirected graphs) are {\sf NL}-complete, and lex-DFS is {\sf P}-complete, our results show that our model is more powerful than ROM if {\sf L} $\neq$ {\sf P}.

En route, we also introduce and develop algorithms for another relaxation of ROM where the adjacency lists of the vertices are circular lists and we can modify only the heads of the lists. Here we first show a linear time DFS implementation using $n + O(\lg n)$ bits of extra space. Improving the extra space further to only $O(\lg n)$ bits, we also obtain BFS and DFS albeit with a slightly slower running time. Some of these algorithms also translate to improved algorithms for DFS and its applications in ROM. Both the models we propose maintain the graph structure throughout the algorithm, only the order of vertices in the adjacency list changes. In sharp contrast, for BFS and DFS, to the best of our knowledge, there are no algorithms in ROM that use even $O(n^{1-\epsilon})$ bits of extra space; in fact, implementing DFS using $cn$ bits for $c<1$ has been mentioned as an open problem.  Furthermore, DFS (BFS) algorithms using $n+o(n)$ ($o(n)$) bits of extra use Reingold's [JACM, 2008] or Barnes et al's reachability algorithm [SICOMP, 1998] and hence have high runtime. Our results can be contrasted with the recent result of Buhrman et al.~[STOC, 2014] which gives an algorithm for {\it directed st-reachability} on {\it catalytic Turing machines} using $O(\lg n)$ bits with catalytic space $O(n^2 \lg n)$ and time $O(n^9)$.

All our algorithms are simple but quite subtle, and we believe that these models are practical enough to spur interest for other graph problems in these models.

\tableofcontents

\newpage
\section{Introduction}
Motivated by the rapid growth of huge data set (``big data''), space efficient algorithms are becoming increasingly important than ever before. The proliferation of specialized handheld devices and embedded systems that have a limited supply of memory provide another motivation to consider space efficient algorithms. 
%
To design {\it space-efficient} algorithms in general, several models of computation have been proposed. Among them, the following two computational models have received considerable attention in the literature.
\begin{itemize}
\item In the \emph{read-only} memory (ROM) model, we assume that the input is given in a read-only memory. 
The output of an algorithm is written on to a separate write-only memory, and the output can not be read or modified again. 
In addition to the input and output media, a limited random access workspace is available. 
Early work on this model was on designing lower bounds~\cite{Beame91,BorodinC82,BorodinFKLT81}, for designing algorithms for selection and sorting~\cite{ChanMR13,ElmasryJKS14,Frederickson87,MunroP80,MunroR96,PagterR98} and problems in computational geometry~\cite{AsanoBBKMRS14,AsanoMRW11,BarbaKLSS15,ChanC07,DarwishE14}. Recently there has been interest on space-efficient graph algorithms \cite{AsanoIKKOOSTU14,BCR,BanerjeeCRRS2015,ChakrabortyRS16,CRS17,ElmasryHK15,HagerupK16,KammerKL16}.

\item In the~\emph{in-place} model, the input elements are given in an array, and the algorithm may use the input array as working space. Hence the algorithm may modify the array during its execution. After the execution, all the input elements should be present in the array (maybe in a permuted order) and the output maybe put in the same array or sent to an output stream. The amount of extra space usage during the entire execution of the algorithm is limited to $O(\lg n)$ bits.
A prominent example of an in-place algorithm is the classic heap-sort. 
Other than in-place sorting \cite{FranceschiniMP07}, searching~\cite{FranceschiniM06,Munro86} and selection \cite{LaiW88} algorithms, many in-place algorithms have been designed in areas such as computational geometry~\cite{BronnimannCC04} and stringology~\cite{FranceschiniM07}. 
\end{itemize}

Apart from these models, researchers have also considered (semi)-streaming models~\cite{AlonMS99,FeigenbaumKMSZ05,MunroP80} for designing space-efficient algorithms. Very recently the following two new models were introduced in the literature with the same objective. 
\begin{itemize}
\item Chan et al.~\cite{ChanMR14} introduced the~\emph{restore} model which is a more relaxed version of read-only memory (and a restricted version of the in-place model), where the input is allowed to be modified, but at the end of the computation, the input has to be restored to its original form. They also gave space efficient algorithms for selection and sorting on integer arrays in this model. This has motivation, for example, in scenarios where the input (in its original form) is required by some other application.

\item Buhrman et al.~\cite{BuhrmanCKLS14,BuhrmanKLS16,Koucky16} introduced and studied the {\it catalytic-space} model where a small amount (typically $O(\lg n)$ bits) of clean space is provided along with some large additional auxiliary space, with the condition that the additional space is initially in an arbitrary, possibly incompressible, state and must be returned to this state when the computation is finished. The input is assumed to be given in ROM. Thus this model can be thought of as having an auxiliary storage that needs to be `restored' in contrast to the model by Chan et al.~\cite{ChanMR14} where the input array has to be `restored'. They show various interesting complexity theoretic consequences in this model and designed significantly better (in terms of space) algorithms in comparison with the ROM model for a few combinatorial problems. 
\end{itemize}

\subsection{Previous work in space efficient graph algorithms}
Even though these models were introduced in the literature with the aim of designing and/or implementing various algorithms space efficiently, {\it space efficient graph algorithms} have been designed only in the (semi)-streaming and the ROM model. In the streaming and semi-streaming models, researchers have studied several basic and fundamental algorthmic problems such as connectivity, minimum spanning tree, matching. See~\cite{McGregor14} for a comprehensive survey in this field. Research on these two models (i.e., streaming and semi-streaming) is relatively new and has been going on for last decade or so whereas the study in ROM could be traced back to almost 40 years. In fact there is already a rich history of designing space efficient algorithms in the read-only memory model. The complexity class {\sf L} (also known as {\sf DLOGSPACE}) is the class containing decision problems that can be solved by a deterministic Turing machine using only logarithmic amount of work space for computation. There are several important algorithmic results~\cite{DattaLNTW09,ElberfeldJT10,ElberfeldK14,ElberfeldS16} for this class, the most celebrated being Reingold's method~\cite{Reingold08} for checking {\it st}-reachability in an undirected graph, i.e., to determine if there is a path between two given vertices $s$ and $t$. {\sf NL} is the non-deterministic analogue of {\sf L} and it is known that the {\it st}-reachability problem for {\it directed} graphs is {\sf NL}-complete (with respect to log space reductions). Using Savitch's algorithm \cite{AroraB}, this problem can be solved in $n^{O(\lg n)}$ time using $O(\lg ^2 n)$ bits of extra space. Savitch's algorithm is very space efficient but its running time is superpolynomial. 

Among the deterministic algorithms running in polynomial time for directed {\it st}-reachability, the most space efficient algorithm is due to Barnes et al.~\cite{BarnesBRS98} who gave a slightly sublinear space (using $n/2^{\Theta(\sqrt{\lg n})}$ bits) algorithm for this problem
running in polynomial time. We know of no better polynomial time algorithm for this problem with better space bound. Moreover, the space used by this algorithm matches a lower bound on space for solving directed {\it st}-reachability on a restricted model of computation called Node Naming Jumping Automata on Graphs (NNJAG’s)~\cite{CookR80,EdmondsPA99}. This model was introduced especially for the study of directed {\it st}-reachability and most of the known sublinear space algorithms for this problem can be implemented on it. Thus, to design any polynomial time ROM algorithm taking space less than $n/2^{\Theta(\sqrt{\lg n})}$ bits requires radically new ideas. Recently there has been some improvement in the space bound for some special classes of graphs like planar and H-minor free graphs~\cite{AsanoKNW14,ChakrabortyPTVY14}. A drawback for all these algorithms using small space i.e., sublinear number of bits, is that their running time is often some polynomial of high degree. This is not surprising as Tompa~\cite{Tompa82} showed that for directed {\it st}-reachability, if the number of bits available is $o(n)$ then some natural algorithmic approaches to the problem require super-polynomial time.

Motivated by these impossibility results from complexity theory and inspired by the practical applications of these fundamental graph algorithms, recently there has been a surge of interest in improving the space complexity of the fundamental graph algorithms without paying too much penalty in the running time i.e., reducing the working space of the classical graph algorithms to $O(n)$ bits with little or no penalty in running time. Thus the goal is to design space-efficient yet reasonably time-efficient graph algorithms on the ROM model. Generally most of the classical linear time graph algorithms take $O(n)$ words or equivalently $O(n \lg n)$ bits of space. Towards this
recently Asano et al.~\cite{AsanoIKKOOSTU14} gave an $O(m \lg n)$ time algorithm using $O(n)$ bits, and another implementation taking $n+ o(n)$ bits, using Reingold's or Barnes et al's reachability algorithm, and hence have high polynomial running time. 
Later, time bound was improved to $O(m \lg \lg n)$ still using $O(n)$ bits in~\cite{ElmasryHK15}. For sparse graphs, the time bound is further improved in~\cite{BCR,ChakrabortyRS16} to optimal $O(m)$ using still $O(n)$ bits of space. Improving on the classical linear time implementation of BFS which uses $O(n \lg n)$ bits of space, recent space efficient algorithms~\cite{BCR,ElmasryHK15,HagerupK16} have resulted in a linear time algorithm using $n \lg 3+o(n)$ bits. We know of no algorithm for BFS using $n+o(n)$ bits and $O(m \lg^c n)$ (or even $O(mn)$) time for some constant $c$ in ROM. The only BFS algorithm taking sublinear space uses $n/2^{\Theta(\sqrt{\lg n})}$ bits~\cite{BarnesBRS98} and has a high polynomial runtime. A few other space efficient algorithms for fundamental graph problems like checking strong connectivity~\cite{ElmasryHK15}, biconnectivity and performing {\it st}-numbering~\cite{ChakrabortyRS16}, recognizing chordal and outerplanar graphs~\cite{ChakrabortyS17,KammerKL16} were also designed very recently.

\subsection{In-place model for graph algorithms}
In order to break these inherent space bound barriers and obtain reasonably time and space efficient graph algorithms, we want to relax the limitations of ROM. And the most natural and obvious candidate in this regard is the classical {\it in-place} model. Thus our main objective is to initiate a systematic study of efficient in-place (i.e., using $O(\lg n)$ bits of extra space) algorithms for graph problems. To the best of our knowledge, this has not been done in the literature before.
Our first goal towards this is to properly define models for the in-place graph algorithms. As in the case of standard in-place model, we need to ensure that the graph (adjacency) structure remains intact throughout the algorithm. Let $G=(V,E)$ be the input graph with $n=|V|$, $m=|E|$, and assume that the vertex set $V$ of $G$ is the set $V=\{1,2,\cdots,n\}$.  
To describe these models, we assume that the input graph representation consists of two parts: (i) an array $V$ of length $n$, where $V[i]$ stores a pointer to the adjacency list of vertex $i$, and (ii) a list of singly linked lists, where the $i$-th list consists of a 
singly linked list containing all the neighbors of vertex $i$ with $V[i]$ pointing to the head of the list. In the ROM model, we assume that both these components 
cannot be modified. In our relaxed models, we assume that one of these components can be modified in a limited way. This gives rise to two different model which we define next.

\noindent
{\bf Implicit model:} The most natural analogue of in-place model allows
any two elements in the adjacency list of a vertex to be swapped (in constant time assuming that we have access to the nodes storing those elements in the singly linked list). The adjacency ``structure'' of 
the representation does not change; only the values stored can be swapped. (One may restrict this further to allow
only elements in adjacent nodes to be swapped. Most of our algorithms work with this restriction.) We call it the \implicit\ model inspired by the notion of {\it implicit data structures}~\cite{Munro86}. We introduce and develop algorithms for another relaxed model which we call the
\rotated\ model. 

\noindent
{\bf Rotate model:}
In this model, we assume that only the pointers stored in 
the array $V$ can be modified, that too in a limited way - to point to any node in the adjacency list, instead of always pointing to the first node. 
In space-efficient setting, since we do not have additional space to store a pointer to the beginning of the adjacency list explicitly, we assume that the second component of the graph representation consists of a
list of circular linked lists (instead of singly linked lists) -- i.e., the last node in the adjacency list of each vertex points
to the first node (instead of storing a null pointer). See the Figure $1$ to get a better visual description. 
We call the element pointed to by the pointer as the front of the list, and a unit cost rotate operation changes the element pointed to by the pointer to the next element in the list.

\begin{figure}[ht]
\label{fig:adjlist}
 \begin{center}
 \includegraphics[scale=.8, keepaspectratio=true]{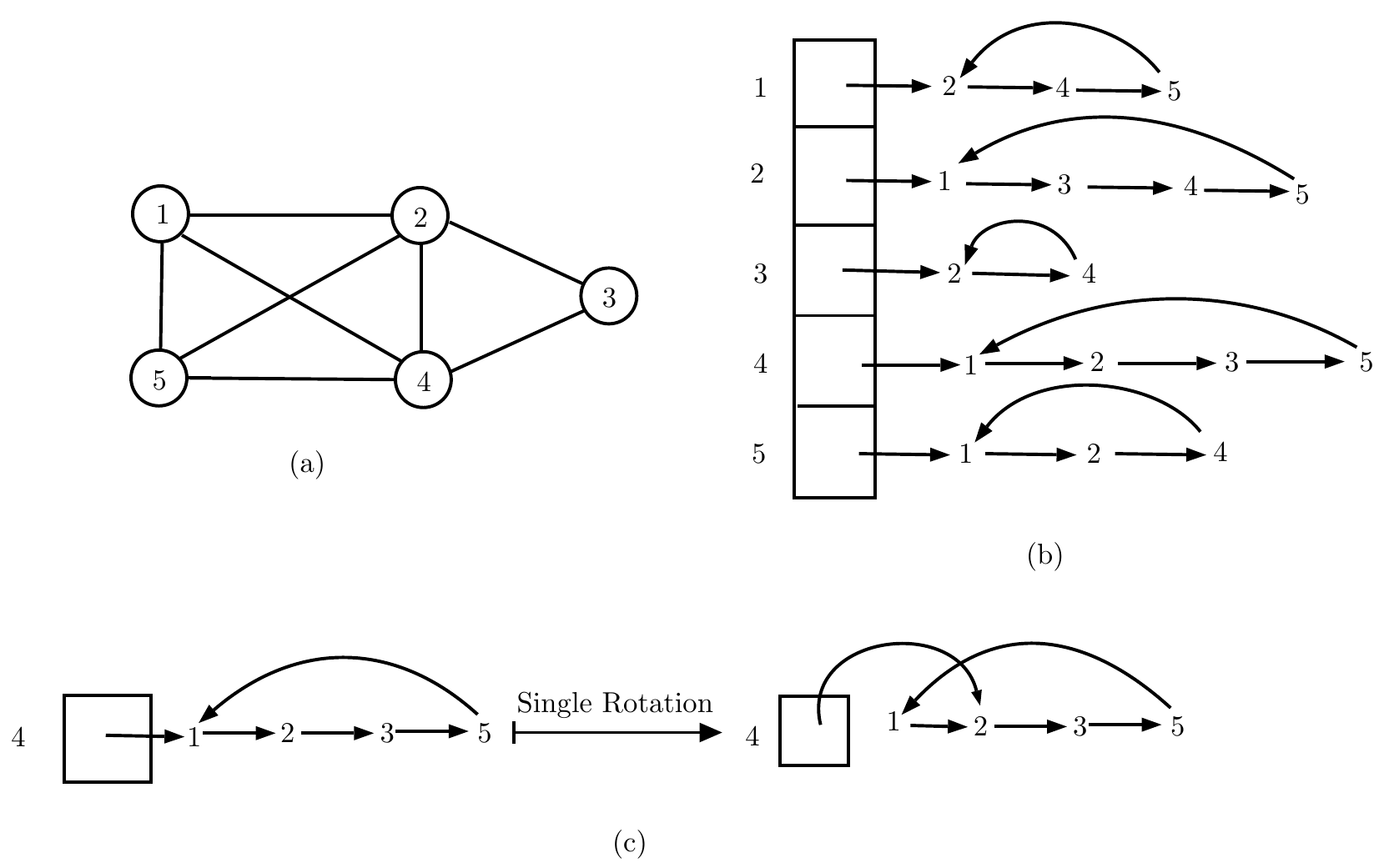}
 \end{center}
 \caption{(a) An undirected graph G with 5 vertices and 8 edges. (b) A circular list representation of G. To avoid cluttering
the picture, we draw the vertices and the pointers to the next node separately as opposed to a single node having two different fields in the circular list. (c) An illustration of a single clockwise rotation in the circular list of vertex 4.}
\end{figure}

Thus the \rotated\ model corresponds to keeping the adjacency lists in read-only memory and allowing (limited) updates on the pointer array that points to these lists. And, the \implicit\ model corresponds to the reverse case, where we keep the pointer array in read-only memory and allow swaps on the adjacency lists/arrays.
A third alternative especially for the \implicit\ model is to assume that the input graph is represented as an adjacency array, i.e., adjacency lists are stored as arrays instead of singly linked lists (see~\cite{ChakrabortyRS16,ElmasryHK15,KammerKL16} for some results using this model); and we allow here that any two elements in the adjacency array can be swapped. In this model, some of our algorithms have improved performance in time.

\subsection{Definitions, computational complexity and notations}
We study some basic and fundamental graph problems in these models. In what follows we provide the definitions and state the computational complexity of some these problems. For the DFS problem, there have been two versions studied in the literature. In the {\it lexicographically smallest DFS} or {\it lex-DFS} problem, when DFS looks for an unvisited vertex to visit in an adjacency list, it picks the ``first'' unvisited vertex where the ``first'' is with respect to the appearance order in the adjacency list. The resulting DFS tree will be unique. In contrast to lex-DFS, an algorithm that outputs {\it some} DFS numbering of a given graph, treats an adjacency list as a set, ignoring the order of appearance of vertices in it, and outputs a vertex ordering $T$ such that there exists {\it some} adjacency ordering $R$ such that $T$ is the DFS numbering with respect to $R$. We say that such a DFS algorithm performs {\it general-DFS}. Reif~\cite{Reif85} has shown that lex-DFS is {\sf P}-complete (with respect to log-space reductions) implying 
that a logspace algorithm for lex-DFS results in the collapse of
complexity classes {\sf P} and {\sf L}. Anderson et al.~\cite{AndersonM87} have shown that even computing the leftmost root-to-leaf path of the lex-DFS tree is {\sf P}-complete. For many years, these results seemed to imply that the general-DFS problem, that is, the computation of any DFS tree is also inherently sequential. However, Aggarwal et al. \cite{AggarwalA88,AggarwalAK90} proved that the general-DFS problem can be solved much more efficiently, and it is in {\sf RNC}. Whether the general-DFS problem is in {\sf NC} is still open.

As is standard in the design of space-efficient algorithms~\cite{BCR,ElmasryHK15}, while working with directed graphs, we assume that the graphs are given as in/out (circular) adjacency lists i.e., for a vertex $v$, we have the (circular) lists of both in-neighbors and out-neighbors of $v$. We assume the word RAM model of computation where the machine consists of words of size $w$ in $\Omega(\lg n)$ bits and any logical, arithmetic and bitwise operation involving a constant number of words takes $O(1)$ time. We count space in terms of number of {\it extra} bits used by the algorithm other than the input, and this quantity is referred as ``extra space'' and ``space'' interchangeably throughout the paper. By a path of length $d$, we mean a simple path on $d$ edges. By $deg(x)$ we mean the degree of the vertex $x$. In directed graphs, it should be clear from the context whether that denotes out-degree or in-degree.
%
By a BFS/DFS traversal of the input graph $G$, as in~\cite{AsanoIKKOOSTU14,BCR,ChakrabortyRS16,ElmasryHK15,HagerupK16}, we refer to reporting the vertices of $G$ in the BFS/DFS ordering, i.e., in the order in which the vertices are visited for the first time.

\subsection{Our Results}

\paragraph{Rotate Model:}
For DFS, in the \rotated\ model, we show the following in Sections~\ref{dfs1},~\ref{dfs2} and~\ref{subsec:dfsrotate-logn}.

\begin{theorem}\label{dfsrotate}
Let $G$ be a directed or an undirected graph, and $\ell \leq n$ be the maximum depth of the DFS tree starting at a source vertex $s$. Then in the \rotated\ model, the vertices of $G$ can be output in
\begin{enumerate}
\item[$(a)$] the lex-DFS order in $O(m+n)$ time using $n \lg 3+O(\lg^2 n)$ bits,
\item[$(b)$] a general-DFS order in $O(m+n)$ time using $n+O(\lg n)$ bits, and 
\item[$(c)$] a general-DFS order in $O(m^2/n+m\ell)$ time for an undirected graph and in $O(m(n+\ell^2))$ time for directed graphs using $O(\lg n)$ bits. For this algorithm, we assume that $s$ can reach all other vertices.
\end{enumerate}
\end{theorem}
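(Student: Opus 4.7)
The common thread across all three parts is to use the head-of-list pointer of every vertex as free implicit memory: when DFS descends from $u$ to $v$, I would rotate $u$'s list so that its head becomes $v$, so that invariantly, for every gray vertex on the current DFS path, the head of its adjacency list names its current child in the DFS tree. With this invariant the DFS stack is encoded for free and no explicit stack is needed; what is left is to use the declared workspace to (i) distinguish white/gray/black and (ii) recover the parent on backtrack. For part (a), I would keep three colors in a packed $n$-trit array consuming $n\lg 3$ bits, with $O(\lg^2 n)$ additional bits absorbing the pack/unpack routine together with a constant number of $\lg n$-bit names and counters. When $v$ needs to backtrack, its parent is recovered by scanning $v$'s list for the unique \emph{gray} neighbor whose head currently equals $v$: any other gray neighbor is a strict ancestor of the parent and has its head pointing to its own child further down the path, while black neighbors are ignored by the color test. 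Since each edge is inspected $O(1)$ times across all parent searches and the rotations total $O(m)$, the overall running time is $O(m+n)$.

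For part (b), I would drop the color array to a single visited bit, set the first time a vertex is entered, and compensate for the lost gray/black distinction by strengthening the rotation invariant: upon first entering $v$ from $u$, rotate $v$'s \emph{own} list until its head coincides with $u$. Processing of $v$ then just advances $v$'s head one step at a time, recursing into any still-unvisited neighbor. When $v$'s head eventually returns to $u$, all of $v$'s neighbors have been inspected and $u$ is literally readable as the current head of $v$, so backtracking is an $O(1)$ operation that needs no scanning. The one-time alignment step costs at most $\deg(v)$ rotations per vertex, which sums to $O(m)$; the total running time is $O(m+n)$ and the space is $n+O(\lg n)$ bits.

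For part (c) the entire $O(\lg n)$-bit workspace is consumed by the current vertex, the source and a handful of counters, so visitedness has no storage at all and must be decided on the fly from the rotation state alone. My plan is to insist on the invariant that at every moment the tuple (rotations, current vertex, $s$) uniquely determines which vertices are gray, which are black, and what the current DFS path is, so that to test whether a candidate neighbor $w$ is already visited one can replay the DFS from $s$, stopping either when $w$ first appears or when the replay reaches the current configuration. In the undirected case I would bound the cost of each such test by $O(m/n+\ell)$ through a potential argument that charges long replays to edges that have turned black since the last test, yielding the advertised $O(m^2/n+m\ell)$ total. For directed graphs the symmetry that lets an undirected child identify its parent through a back edge is gone, so every reconstruction is essentially global, charging $O(n+\ell^2)$ per vertex and giving $O(m(n+\ell^2))$ overall. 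The main technical obstacle I expect is precisely this last point: verifying that the rotations, despite being our only memory, carry enough information to make the replay unambiguous and that no invariant is silently corrupted by the asymmetry of directed edges or by interactions between partially-rotated lists belonging to ancestors on the DFS path.
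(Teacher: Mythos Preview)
Your treatment of part~(a) is correct and essentially identical to the paper's: maintain three colors in a packed trit array, keep the current child at the head of each gray vertex's list, and backtrack by scanning for the unique gray neighbor whose head is $v$ (this is the paper's Lemma~1).

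For part~(b) your variant is close to the paper's but has a gap. You correctly bring the parent $u$ to the front of $v$'s list and then step the head through the neighbors, but your termination criterion ``when $v$'s head eventually returns to $u$'' is not operational: once you recurse into a child you no longer remember $u$, so upon returning you have no constant-space way to recognize that the head is back at the parent. An explicit test is required. The paper's solution (Lemma~2) is to take the \emph{first} visited neighbor in $v$'s list whose own head equals $v$; the word ``first'' is essential because leaf children of $v$ also finish with $v$ at their head, and only the fact that $u$ was placed at the front before any child was discovered guarantees that $u$ precedes them. Your one-step-at-a-time variant can be repaired with the same idea (advance; if the new head $w$ is visited and $w$'s head is $v$, then $w$ is the parent), but without stating this test your ``$O(1)$ backtracking'' claim is incomplete.

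Part~(c) is where you diverge substantially from the paper, and the plan does not work as stated. The paper never replays the DFS. It maintains two invariants---gray vertices have their current child at the head, black vertices have their parent at the head---and gives direct structural tests (Lemma~3, and its directed analogue): $y$ is gray iff it lies on the head-path of length $\le d$ from $s$; $y$ is black iff following heads from $y$ reaches $x$ in at most $\max-d$ steps via a node $z$ that appears \emph{after} $p$ in $x$'s list (the directed case also handles cross edges by letting the path hit any gray ancestor, with a positional check against a preprocessed $\min$). Each test costs $O(\deg(x)+\ell)$ in the undirected case, and summing over edges gives the bound directly; no potential argument is involved. Your ``replay the DFS from $s$'' is not well-defined: a genuine replay must itself decide visitedness, which is precisely the query, so the recursion never bottoms out. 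If you instead mean ``follow heads from $s$'', that only traces the current gray path and cannot detect black vertices. The paper's black test crucially starts the head-following from $y$, not from $s$, and needs the extra positional check against $p$ (or $\min$) to separate genuine black descendants from white vertices whose uninitialized head-pointers happen to lead back to the gray path; this positional test is the piece your proposal is missing.
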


This is followed by the BFS algorithms where, in the \rotated\ model, we show the following in Sections~\ref{subsec:rotatebfs} and~\ref{subsec:rotatebfs-logn}.

\begin{theorem}\label{bfsrotate}
Let $G$ be a directed or an undirected graph, and $\ell$ be the depth of the BFS tree starting at the source vertex $s$. Then in the \rotated\ model, the vertices of $G$ can be output in a BFS order in 
\begin{enumerate}
\item[$(a)$] $O(m+n\ell^2)$ time using $n + O(\lg n)$ bits, and
\item[$(b)$] $O(m \ell + n \ell^2)$ time using $O(\lg n)$ bits. Here we assume that the source vertex $s$ can reach all other vertices.
\end{enumerate}
\end{theorem}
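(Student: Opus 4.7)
The plan is to perform a level-by-level BFS in which each rotation is used to record the BFS parent of the vertex whose list is rotated: once a vertex $v$ is first discovered from some $u$, we rotate $v$'s circular adjacency list so that $u$ becomes its head. This enforces the invariant that at any already-visited $v$, iterating the head map walks strictly upward along the BFS tree and reaches $s$ in exactly $\mathrm{level}(v)$ steps. Consequently, the predicate ``$v$ is at level $i$'' can be tested by checking that $i$ applications of the head map at $v$ land on $s$ while fewer than $i$ do not; each such test costs $O(\ell)$ time and uses only $O(\lg n)$ bits for the step counter.

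For part (a), we additionally maintain an $n$-bit visited array plus $O(\lg n)$ bits of loop/counter state. The algorithm proceeds in at most $\ell$ rounds. In round $i$, we enumerate the level-$i$ vertices by iterating $v = 1,\dots,n$, restricting to visited vertices and applying the head-walk test above, which costs $O(n\ell)$ per round. For every level-$i$ vertex $u$ so found, we scan its adjacency list; for each unvisited neighbor $w$, we mark $w$ visited, rotate $w$'s head to point to $u$, and output $w$. Since each edge is scanned only in the round that processes its level-$i$ endpoint, neighbor scanning totals $O(m)$ across the entire computation, while the level enumerations contribute $O(\ell)\cdot O(n\ell)=O(n\ell^2)$; altogether $O(m+n\ell^2)$ time and $n+O(\lg n)$ bits of space, as required.

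For part (b) we remove the visited bit array and rely entirely on rotations. The head-walk test still correctly identifies genuinely visited vertices, but a subtle danger arises: the unrotated head of an unvisited $w$ may point to a neighbor whose ancestry already traces to $s$, producing a false positive ``level-$i$'' reading for $w$. We handle this by pairing every discovery attempt with a consistency verification: when scanning a neighbor $w$ of a level-$i$ vertex $u$, we admit $w$ at level $i+1$ only after confirming, via head-walks of length at most $i$, that $w$ is not already attached to the tree at some strictly smaller level. Each verification costs $O(\ell)$, and summed over the $O(m)$ (level-$i$ vertex, incident edge) pairs across all rounds this contributes $O(m\ell)$; level enumeration again contributes $O(n\ell^2)$, producing the claimed $O(m\ell+n\ell^2)$ time in $O(\lg n)$ bits. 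The hypothesis that $s$ reaches every vertex guarantees that every vertex eventually surfaces within $\ell+1$ rounds.

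The main technical obstacle is to show that the head-walk-based tests in part (b) are both sound and complete given that rotations encode parents only for already-processed vertices. This reduces to proving, by induction on the round index, the invariant that head walks originating from an unprocessed vertex never reach $s$ in fewer steps than the true BFS distance of that vertex; preserving this invariant relies on the level-monotone order in which rotations are installed, together with the ``not already attached'' verification inserted at each discovery, and is the step where the analysis will need the most care.
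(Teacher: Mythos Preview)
Your algorithm for both parts is essentially the paper's: rotate each newly discovered vertex's list so its parent is at the head, enumerate level-$i$ vertices by testing whether the head-walk reaches $s$ in exactly $i$ steps, and (in part~(b)) replace the visited-bit lookup by a head-walk of length at most $i$. The running-time accounting also matches.

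Where you diverge is in the correctness argument for part~(b), and there you are making the analysis harder than it is. The invariant you flag as the ``main technical obstacle''---that a head-walk from an unprocessed vertex never reaches $s$ in fewer steps than that vertex's true BFS distance---is not something that has to be maintained inductively; it is an immediate consequence of the definition of distance. Every head pointer, whether it has been rotated or not, is an edge of $G$ (an in-edge in the directed case), and following an edge can decrease distance to $s$ by at most one. Hence any head-walk of length $j$ from a vertex at distance $k$ lands at distance at least $k-j$, and reaching $s$ forces $j\ge k$. The paper records exactly this as its Claim~2, with the one-line justification that an unvisited vertex (which is at level $>d$) has no neighbor at level $<d$ and therefore cannot head-walk to $s$ in $\le d$ steps. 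No induction, and no reliance on the order in which rotations are installed, is required.

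Your ``not already attached'' verification is still needed, but for a different and simpler reason than you state: it prevents you from re-rotating, and thereby corrupting, the parent pointer of an already-visited neighbor. It plays no role in protecting the invariant about \emph{unprocessed} vertices, which, as above, holds unconditionally.
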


\paragraph{Implicit Model:} In the \implicit\ model, we obtain polynomial time implementations for lex-DFS and general-DFS using $O(\lg n)$ bits. For lex-DFS, this is conjectured to be unlikely in ROM as the problem is {\sf P}-complete~\cite{Reif85}. In particular, we show the following in Section~\ref{subsec:implicitdfs}.

\begin{theorem}\label{dfsimplicit}
Let $G$ be a directed or an undirected graph with a source vertex $s$ and $\ell\leq n$ be the maximum depth of the DFS tree starting at $s$ that can reach all other vertices. Then in the \implicit\ model, using $O(\lg n)$ bits the vertices of $G$ can be output in
\begin{enumerate}
\item[$(a)$]
the lex-DFS order in $O(m^3/n^2 + \ell m^2/n)$ time if $G$ is given in adjacency list and in $O(m^2 \lg n / n)$ time if $G$ is given in adjacency array for undirected graphs. For directed graphs our algorithm takes $O(m^2(n+\ell^2)/n)$ time if $G$ is given in adjacency list and $O(m\lg n (n+\ell^2))$ time if $G$ is given in adjacency array;

\item[$(b)$] a general-DFS traversal order in $O(m^2/n)$ time if the input graph $G$ is given in an adjacency list and in $O(m^2 (\lg n) / n + m \ell \lg n))$ time if it is given in an adjacency array.
\end{enumerate}
\end{theorem}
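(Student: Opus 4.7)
The plan is to simulate depth-first search using only $O(\lg n)$ bits of explicit state, encoding all per-vertex DFS bookkeeping implicitly in permutations of the adjacency lists that the \implicit\ model permits. The central invariant we maintain is that, once a vertex $v$ has been entered by the DFS, its adjacency list has been rearranged so that the DFS-parent of $v$ occupies the head. This invariant lets the DFS tree be reconstructed by walking head-pointers from the source $s$, and it supplies a local test for whether a candidate neighbor has already been visited. The only persistent information we store explicitly is the identity of the current stack-top vertex and a handful of indices into its adjacency list, which fits in $O(\lg n)$ bits.

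For part $(b)$ (general-DFS), the state at a stack-top vertex $v$ is encoded as a partition of $v$'s list into the blocks [parent, finished children, unprocessed], with the boundaries recoverable by inspecting the heads of the vertices appearing in $v$'s list. To advance the DFS we scan the unprocessed portion of $v$ looking for the next unvisited neighbor $u$; to decide whether $u$ is visited we compare the head of $u$'s list against the vertices currently on the DFS stack, which can be enumerated in $O(\ell)$ time by walking head-pointers from $s$. On descent into $u$ we swap $v$ into the head slot of $u$'s list; on backtrack we just read $v$ back from that slot. A careful amortized analysis charges each visited-check to the degrees of the participating vertices, yielding the stated $O(m^2/n)$ bound for adjacency lists. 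For the adjacency-array variant we replace linear scans inside each list by binary searches, introducing the factor of $\lg n$ and an additive $O(m\ell \lg n)$ term that accounts for reconstructing the stack each time we need to test membership against it.

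For part $(a)$ (lex-DFS), the algorithm must output the vertices in the order dictated by the initial orientation of the adjacency lists, so we can no longer leave our encoding swaps in place indefinitely. After each backtrack we perform an in-place rotation of the touched portion of $v$'s list that restores the original relative order of the unprocessed tail, so that the next unvisited neighbor the algorithm picks agrees with the lex-DFS choice. This restoration step inflates the amortized cost of each edge exploration by an $O(m/n)$ factor for undirected graphs, producing the $O(m^3/n^2 + \ell m^2/n)$ bound; the adjacency-array version again trades each linear scan for a binary search. For directed graphs the DFS-parent lives in the in-neighborhood rather than the out-neighborhood of $v$, so recovering the parent on a backtrack requires scanning in-lists of length up to $n$, which is what gives the extra $(n+\ell^2)$ factor in the directed bounds.

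The main obstacle is establishing the soundness of the visited-check using only local information. We must strengthen the head-is-parent invariant to the point where a neighbor $u$ of $v$ is visited if and only if $\operatorname{head}(u)$ can be matched against a short certificate read off the current stack, and we must verify that every swap performed by the algorithm (both the state-encoding swaps in part $(b)$ and the lex-order-restoring rotations in part $(a)$) preserves this invariant. Once soundness is in hand, the running-time analysis reduces to a straightforward degree-averaging computation; bounding the extra work needed to maintain the lex ordering under repeated swaps is what produces the slower bounds in part $(a)$, whereas in part $(b)$ the swaps need never be undone and the degree average directly gives $O(m^2/n)$.
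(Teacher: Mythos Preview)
Your visited test is unsound, and this undermines both parts. You propose to decide whether a neighbor $u$ of the current vertex $v$ is visited by checking whether $\mathrm{head}(u)$ lies on the current DFS stack. Consider $v$ with neighbors $a,b$ and an extra edge $(a,b)$: the DFS descends $v\to a\to b$, finishes $b$ (so $\mathrm{head}(b)=a$), finishes $a$, and returns to $v$; now $v$ examines $b$, but $\mathrm{head}(b)=a$ is black and off the stack, so your test wrongly declares $b$ unvisited and the search re-enters $b$. Any black descendant of $v$ at depth at least~$2$ defeats the test in the same way. (There is also a direction mismatch: with ``parent at head'' you can walk \emph{up} from the current vertex to $s$, not down from $s$ as you write.) Even granting soundness, an $O(\ell)$ stack walk per edge gives $O(m\ell)$, which is not the claimed $O(m^2/n)$.

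The paper's route is different, and this is the idea you are missing. For part~(b) it encodes the visited bit of each vertex $v$ with $\deg(v)\ge 3$ \emph{directly} as the relative order of the second and third elements of $v$'s list (the first slot still holds the parent); degree-$1$ and degree-$2$ vertices are handled by easy special cases. With a constant-time visited check the algorithm is exactly the $n+O(\lg n)$-bit rotate-model DFS of Theorem~\ref{dfsrotate}(b) simulated in $O(\lg n)$ bits, and $\sum_v d_v^2=O(m^2/n)$ falls out immediately. For part~(a) the paper instead simulates the rotate-model $O(\lg n)$-bit algorithm of Theorem~\ref{dfsrotate}(c), whose color test (Lemma~\ref{backtrack3}) combines path-following with a positional check inside the list; lex order is preserved not by any post-backtrack ``restoration'' but simply by bringing the parent to the front without disturbing the remaining neighbors and then simulating each rotate by a full circular shift, the overhead of Theorem~\ref{simulation_result} supplying the extra $d_v$ (resp.\ $\lg d_v$) factor. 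Your account of the directed $(n+\ell^2)$ term is also off: the parent sits at the head of the in-list and is read in $O(1)$; the factor arises from the cross-edge test, which walks $O(\ell)$ steps, spends $O(\ell)$ per step on a grayness check, and pays up to $O(n)$ once for a positional comparison.
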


In the \implicit\ model, we can match the runtime of BFS from \rotated\ model, and do better in some special classes of graphs. In particular, we show the following in Section~\ref{omit_proof} .

\begin{theorem}\label{bfsimplicit}
Let $G$ be a directed or an undirected graph with a source vertex that can reach all other vertices by a distance of at most $\ell$.  Then in the \implicit\ model,  using $O(\lg n)$ bits the vertices of $G$ can be output in a BFS order in 
\begin{enumerate}
\item[$(a)$] $O(m + n \ell^2)$ time;
\item[$(b)$] the runtime can be improved to $O(m+n\ell)$ time if there are no degree $2$ vertices;
\item[$(c)$] the runtime can be improved to $O(m)$ if the degree of every vertex is at least $2 \lg n +3$.
\end{enumerate}
\end{theorem}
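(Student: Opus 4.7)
The plan is to perform BFS level by level, where each vertex's current BFS ``color'' is encoded implicitly by the permutation of its adjacency list. The three parts differ in how much information can be stored per vertex, which controls how much recomputation is needed.

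For part (c), when every vertex has degree at least $2\lg n + 3$, I would use the extra list positions as a scratch area: after sorting the first $\Theta(\lg n)$ entries of each adjacency list once at the start (affordable within $O(m)$ given the high degree), each consecutive pair of sorted entries can encode one bit by being either in-order or swapped, so each vertex can store its $\lg n$-bit BFS level plus a few marker bits distinguishing ``level-set'' from ``unvisited.'' With levels retrievable in $O(1)$ per vertex, a standard level-by-level BFS then runs in $O(m)$ total time.

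For part (b), where no vertex has degree $2$, every vertex is either a degree-$1$ leaf or has degree $\geq 3$. A permutation of the first three neighbors supplies six distinguishable states, enough for the four colors $\{\white,\grayone,\graytwo,\black\}$, in which $\grayone$ and $\graytwo$ alternate between ``current level'' and ``next level'' under a $1$-bit phase flag kept in the $O(\lg n)$ workspace. Degree-$1$ vertices carry no color of their own and are processed the first time their unique neighbor is examined. At each of the $\ell$ levels we scan all $n$ vertices, outputting and recoloring each $\grayone$ vertex and pushing its $\white$ neighbors to $\graytwo$; summed with the $O(m)$ edge-traversal cost this gives $O(m + n\ell)$.

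For part (a) I would keep the four-color scheme for vertices of degree $\neq 2$ and handle degree-$2$ vertices with a weaker encoding: the order of the two neighbors gives only a single visited/unvisited bit. Whenever the level-$i$ scan encounters a degree-$2$ vertex, I would reconstruct its exact BFS level on the fly by walking along its chain of degree-$2$ neighbors until reaching a properly colored vertex of degree $\geq 3$ (or the source), whose level is directly readable; such a chain has length at most $\ell$. Summing this $O(\ell)$ overhead over $O(n)$ degree-$2$ probes per level and $\ell$ levels gives $O(m + n\ell^2)$. The main obstacle is precisely this degree-$2$ case: the reconstruction step must stay consistent with the four-color updates performed independently on its higher-degree neighbors, and each permutation update throughout the algorithm must be $O(1)$-time and decodable against a canonical reference (obtained by the initial sorting of the leading adjacency positions); for directed graphs a further subtlety is that only the out-adjacency list should be permuted, while the in-adjacency list is read but left untouched, so that both representations remain consistent at termination.
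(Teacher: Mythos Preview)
Your treatment of part~(b) is correct and is essentially what the paper does: encode the four colors $\{\white,\grayone,\graytwo,\black\}$ in the permutation of the first three neighbors of each degree-$\ge 3$ vertex, handle degree-$1$ vertices on the fly, and pay an $O(n)$ scan per level plus $O(m)$ total edge work.

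Part~(c), however, has a real gap. Storing each vertex's BFS level in the first $\Theta(\lg n)$ list positions does \emph{not} give you an $O(m)$ algorithm: even if the level were readable in $O(1)$ time (it is actually $O(\lg n)$, since you must decode $\lg n$ pair-encoded bits), you still have no way to iterate over the current frontier without scanning all $n$ vertices at each of the $\ell$ levels, which is $O(n\ell)$ and not $O(m)$ in general (take a ``thick path'' where every vertex has degree $\Theta(\lg n)$: then $m=\Theta(n\lg n)$ but $\ell=\Theta(n/\lg n)$, so $n\ell=\Theta(n^2/\lg n)\gg m$). The paper spends the $\lg n$ per-vertex bits differently: it threads the current $\grayone$ vertices into a \emph{linked list}, using the $2\lg n$ list positions to encode a pointer to the next $\grayone$ vertex. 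Then traversing a frontier of size $k$ costs $O(k\lg n)$ rather than $O(n)$, and summing over all levels gives $O(m+n\lg n)=O(m)$.

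Part~(a) also has a gap, and it stems from trying to extend the four-color scheme rather than switching mechanisms. You write that after walking a degree-$2$ chain to a degree-$\ge 3$ endpoint, that endpoint's ``level is directly readable''---but in the four-color encoding it is not: a \black\ endpoint only tells you its level is strictly below the current one, not what that level is, so you cannot recover the chain vertex's level by adding the walk length. The paper avoids this by \emph{not} using four colors for part~(a). Instead it simulates the $n+O(\lg n)$-bit parent-pointer BFS of Theorem~\ref{bfsrotate}(a): for each degree-$\ge 3$ vertex it brings the BFS parent to the front and encodes the visited bit with the second and third elements; for a degree-$2$ vertex it stores only the visited bit. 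Testing whether a visited vertex is at depth $d$ is then done by following parent pointers for exactly $d$ steps and checking whether one lands at $s$; for a degree-$2$ vertex one tries both neighbors as the potential parent. This gives an $O(d)$ test per vertex and hence $O(m+n\ell^2)$ overall, matching the rotate-model analysis.
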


In sharp contrast, for space efficient algorithms for DFS in ROM, the landscape looks markedly different. To the best of our knowledge, there are no DFS algorithms in general graphs in ROM that use $O(n^{1-\epsilon})$ bits. In fact, an implementation of DFS taking $cn$ bits for $c<1$ has been proposed as an open problem by Asano et al.~\cite{AsanoIKKOOSTU14}. Similar to DFS, to the best of our knowledge, there are no polynomial time BFS algorithms in ROM that use even $O(n^{1-\epsilon})$ bits. On the other hand, we don't hope to have a BFS algorithm (for both undirected and directed graphs) using $O(\lg n)$ bits in ROM as the 
problem is {\sf NL}-complete~\cite{AroraB}. \\

{\bf Minimum Spanning Tree (MST).} Moving on from DFS and BFS, we also study the problem of reporting a minimum spanning tree (MST) of a given undirected connected graph $G$. We show the following result in Section~\ref{mstproof}.

\begin{theorem}\label{mst_proof}
A minimum spanning forest of a given undirected weighted graph $G$
can be found using $O(\lg n)$ bits and in
\begin{enumerate}
\item[$(a)$]  $O(mn)$ time in the \rotated\ model,
\item[$(b)$]  $O(mn^2)$ time in the \implicit\ model if $G$ is given in an adjacency list, and
\item[$(b)$]  $O(mn \lg n)$ time in the \implicit\ model when $G$ is represented in an adjacency array.
\end{enumerate}
\end{theorem}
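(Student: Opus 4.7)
My plan is to implement Prim's classical MST algorithm in-place, producing a minimum spanning forest by restarting from each unvisited source vertex whenever the graph is disconnected. The central obstacle is maintaining, within $O(\lg n)$ bits of workspace and without any explicit $n$-bit bitmap, the set of vertices currently in the growing tree so that each iteration's minimum-weight cut edge can be identified by a single $O(m)$ edge scan.

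In both models I would encode the in-tree bit of each vertex inside the adjacency structure itself, using only those modifications that the model permits. In the \rotated\ model, I first rotate every circular adjacency list of degree at least~$2$ into a canonical orientation in which the vertex at the front has a smaller index than its immediate successor; such a rotation exists in any cyclic list of distinct values, and obtaining it takes one linear scan plus $O(d_v)$ rotations per vertex, for $O(m)$ total preprocessing. A vertex is then marked as in-tree by rotating its list further into a state in which the front is strictly larger than its successor, and the membership test reduces to comparing the two values at the head of the list in $O(1)$ time. Running $n-1$ Prim iterations, each consisting of an $O(m)$ edge scan plus one $O(d_v)$ re-rotation when the new endpoint is absorbed into the tree, gives the $O(mn)$ bound claimed in part~$(a)$. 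Degree-$1$ vertices (leaves) have only one possible list state and so cannot encode a bit on their own; I handle them as a special case by outputting each such edge as soon as its non-leaf endpoint joins the tree and by delegating the membership test for a leaf to its unique neighbor.

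For the two \implicit\ variants I reuse the same Prim skeleton but replace rotations by swaps. When the input is given as an adjacency array (part~$(c)$), I first sort each array in place, so that the in-tree bit can be flipped by a single swap of the first and last entries and tested by the comparison $A_v[0] > A_v[d_v - 1]$ in $O(1)$; carrying the sorting cost through together with the scan yields the stated $O(mn\lg n)$ bound. When the input is given as a singly linked adjacency list (part~$(b)$), efficient in-place sorting is not straightforwardly supported in this model, and the cost of locating the two canonical nodes to swap or to compare during each scan carries an extra $O(n)$ factor per Prim iteration, yielding the $O(mn^2)$ bound.

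The hardest ingredient is arranging the constant-time (or near constant-time) membership test without violating the $O(\lg n)$-bit space bound and without altering the set of adjacencies; the canonical preprocessing is exactly what makes the two-element head comparison a valid self-contained encoding of one bit per vertex. Once this primitive is in hand, the rest of the argument is a standard correctness proof for Prim's algorithm, together with a routine observation that reinvoking the procedure from every not-yet-reached vertex assembles the full minimum spanning forest on disconnected inputs.
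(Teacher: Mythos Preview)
Your approach is essentially the paper's: run Prim's algorithm, encode each vertex's in-tree bit inside its own adjacency structure so that no explicit $n$-bit array is needed, and treat degree-$1$ vertices as a special case whose unique incident edge is forced into the forest. The paper's specific encoding in the \rotated\ model is ``is the minimum-labelled neighbour currently at the front?'' (which costs $O(d_v)$ to test), whereas you use ``is the front smaller than its successor?'' (which costs $O(1)$); both yield the same $O(mn)$ bound for part~(a), and yours is in fact a cleaner primitive.

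For parts~(b) and~(c), the paper does not design a new algorithm at all: it simply invokes the generic \rotated-to-\implicit\ simulation (Theorem~\ref{simulation_result}), which charges $O(d_v)$ per rotation in the list case and $O(\lg d_v)$ per rotation in the array case, giving $O(mn^2)$ and $O(mn\lg n)$ directly. Your direct arguments for these two parts are correct in outcome but the accounting is loose: with your first-two-element (or first/last in the array case) encoding, both the membership test and the flip are $O(1)$, so your scheme would actually give $O(mn)$ in all three settings, not the weaker bounds you state; the ``extra $O(n)$ factor'' you invoke for the linked-list case and the ``sorting cost'' you invoke for the array case are not where the claimed overheads in the theorem come from. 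This does not harm correctness---you are proving a stronger running time than required---but you should either tighten the analysis or, as the paper does, simply appeal to the black-box simulation to obtain the stated bounds.
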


Note that by the results of~\cite{Reif84,Reingold08}, we already know logspace algorithms for MST in ROM but again the drawback of those algorithms is their large time complexity. On the other hand, our algorithms have relatively small polynomial running time, simplicity, making it an appealing choice in applications with strict space constraints.

\subsection{Techniques}

Our implementations follow (variations of) the classical algorithms for BFS and DFS that use three colors (\white, \gray\ and \black), but avoid the use of stack (for DFS) and queue (for BFS). In the \rotated\ model, we first observe that in the usual search algorithms one can dispense with the extra data structure space of pointers maintaining the search tree (while retaining the linear number of bits and a single bit per vertex in place of the full unvisited/visited/explored array) simply by rotating each circular adjacency lists to move the parent or a (typically the currently explored) child to the beginning of the list to help navigate through the tree during the forward or the backtracking step, i.e. by changing the pointer from the vertex to the list of its adjacencies by one node at a time. This retains the basic efficiency of the search strategies. The nice part of this strategy is that the total number of rotations also can be bounded. To reduce the extra space from linear to logarithmic, it is noted that one can follow the vertices based on the current rotations at each vertex to determine the visited status of a vertex, i.e. these algorithms use the rotate operation in a non-trivial way to move elements within the lists to determine the color of the vertices as well. However, the drawback is that to do so could require moving up (or down) the full height of the implicit search tree.  This yields super-linear rather than (near-) linear time algorithms.

In the \implicit\ model, we use the classical {\it bit encoding} trick used in the development of the implicit data structures~\cite{Munro86}.
We encode one (or two) bit(s) using a sequence of two (or three respectively) distinct numbers. 
To encode a single bit $b$ using two distinct values $x$ and $y$ with $x < y$, we store the sequence 
$x, y$ if $b = 0$, and $y, x$ otherwise.
Similarly, permuting three distinct values $x, y, z$ with $x < y < z$, we can represent six combinations. We can choose any of the four combinations to represent up to $4$ colors (i.e. two bits). 
Generalizing this further, we can encode a pointer taking $\lg n$ bits using $2 \lg n$ distinct elements where reading or updating a bit takes constant time, and reading or updating a pointer takes $O(\lg n)$ time. This also is the reason for the requirement of vertices with (high) degree at least $3$ or $2 \lg n +3$ for faster algorithms, which will become clear in the description of the algorithms.


\subsection{Consequences of our BFS and DFS results}
There are many interesting and surprising consequences of our results for BFS and DFS in both the \rotated\ and \implicit\ model. In what follows, we mention a few of them.
\begin{itemize}
\item For {\it directed st-reachability}, as mentioned previously, the most space efficient polynomial time algorithm~\cite{BarnesBRS98} uses $n/2^{\Theta(\sqrt{\lg n})}$ bits. In sharp contrast, we obtain efficient (timewise) log-space algorithms for this problem in both the \rotated\ and \implicit\ models (as a corollary of our directed graph DFS/BFS results). In terms of workspace this is exponentially better than the best known polynomial time algorithm~\cite{BarnesBRS98} for this problem in ROM. For us, this provides one of the main motivations to study this model. A somewhat incomparable result obtained recently by Buhrman et al.~\cite{BuhrmanCKLS14,Koucky16} where they designed an algorithm for {\it directed st-reachability} on catalytic Turing machines in space $O(\lg n)$ with catalytic space $O(n^2 \lg n)$ and time $O(n^9)$.

\item Problems like {\it directed st-reachability}~\cite{AroraB}, {\it distance}~\cite{Tantau07} which asks whether a given $G$ (directed, undirected or even directed acyclic) contains a path of length at most $k$ from $s$ to $t$, are {\sf NL}-complete i.e., no deterministic log-space algorithm is known for these problems. But in our (both the  \rotated\ and \implicit) models, we design log-space algorithms for them. Assuming {\sf L} $\neq$ {\sf NL}, these results show that probably both our models with log-space are stronger than {\sf NL}. 

\item The lex-DFS problem (both in undirected and directed graphs) is {\sf P}-complete~\cite{Reif85}, and thus polylogarithmic space algorithms are unlikely to exist in the ROM model. But we show an $O(\lg n)$ space algorithm in the \implicit\ model for lex-DFS. This implies that, probably the \implicit\ model is even more powerful than the  \rotated\ model. It could even be possible that every problem in {\sf P} can be computed using log-space in the \implicit\ model. A result of somewhat similar flavor is obtained recently Buhrman et al.~\cite{BuhrmanCKLS14,Koucky16} where they showed that any function in ${\sf TC^1}$ can be computed using catalytic log-space, i.e., ${\sf TC^1} \subseteq {\sf CSPACE(\lg n)}$. Note that ${\sf TC^1}$ contains {\sf L}, {\sf NL} and even other classes that are conjectured to be different from {\sf L}. 

\item Our bounds for BFS and DFS in the \rotated\ and \implicit\ models immediately imply (with some care) similar bounds, that are improvement over the best space bounds known so far in ROM, for many applications of DFS/BFS. Moreover, as described before, any algorithm in the \rotated\ model can be implemented in ROM using extra $O(n \lg (m/n))$ bits. Thus, our linear time DFS algorithm in the \rotated\ model can be implemented in ROM using $O(n \lg (m/n))$ bits, matching the bound 
of~\cite{ChakrabortyRS16} for DFS. Using this DFS implementation, we can obtain improved space efficient algorithms for various applications of DFS in ROM. This is discussed in Section~\ref{romdfs}.

\item In Section~\ref{baker} we present Logspace Approximation Scheme or LSAS ($(1\pm \epsilon)$ approximation algorithm running in logspace for every $\epsilon> 0$) for a class of MSO-definable optimization problems which are amenable to the Baker's method~\cite{Baker94} in locally bounded treewidth graphs in both of our \rotated\ and \implicit\ models. No such algorithms are known in ROM as Baker's method requires to compute {\it distance} which is {\sf NL}-complete. As BFS admits logspace algorithms in our models, we can design such LSAS for these problems here.

\item For a large number of NP-hard graph problems, the best algorithms in ROM run in exponential time and polynomial space. We show that using just logarithmic amount of space, albeit using exponential time, we can design algorithms for those NP-hard problems in both of our models under some restrictions. This gives an exponential improvement over the ROM space bounds for these problems. This is described in Section~\ref{nphardlogspace}. In constrast, note that, no NP-hard problem can be solved in the ROM model using $O(\lg n)$ bits unless {\sf P=NP}. 
\end{itemize}

\section{DFS algorithms in the \rotated\ model}
In this section, we describe our space-efficient algorithms for DFS in the \rotated\ model proving Theorem~\ref{dfsrotate}.

\subsection{Proof of Theorem~\ref{dfsrotate}(a) for undirected graphs}\label{dfs1}
We begin by describing our algorithm for 
 undirected graphs, and later mention the changes required for directed graphs.
In the normal exploration of DFS (see for example, Cormen et al.~\cite{CLRS}) we use three colors. Every vertex $v$ is \white\ initially while it has not been discovered yet, becomes \gray\ when DFS discovers $v$ for the first time, and is colored \black\ when it is finished i.e., all its neighbors have been explored completely.


We maintain a color array $C$ of length $n$ that stores the color of each vertex at any point in the algorithm. In the rest of the paper, when we say we scan the adjacency list of some vertex $v$, what we mean is, we create a temporary pointer pointing to the current first element of the list and move this temporary pointer untill we find the desired element. Once we get that element we actually rotate the list so that the desired element now is at the front of the list. 
We start DFS at the starting vertex, say $s$, changing its color from \white\ to \gray\ in the color array $C$. Then we scan the adjacency list of $s$ to find the first \white\ neighbor, say $w$. We keep rotating the list to bring $w$ to the front of $s$'s adjacency list (as the one pointed to by the head $V[s]$), color $w$ \gray\ in the color array $C$ and proceed to the next step (i.e. to explore $w$'s adjacency list). This is the first \emph{forward} step of the algorithm. 
In general, at any step during the execution of the algorithm, whenever we arrive at a \gray\ vertex $u$ (including the case when $u$'s color is changed from \white\ to \gray\ in the current step), we scan $u$'s adjacency list to find the first \white\ vertex.
(i) If we find such a vertex, say $v$, then we rotate $u$'s list to make $v$ as the first element, and change the color of $v$ to \gray.
(ii) If we do not find any \white\ vertex, then we change the color of $u$ to \black, and {\em backtrack} to its parent.
%
To identify $u$'s parent, we use the following lemma. 


\begin{lemma}\label{backtrack}
Suppose $w$ is a node that just became \black. Then its parent $p$ is the 
unique vertex in $w$'s adjacency list which is (a) \gray\ and (b) whose 
current adjacency list has $w$ in the first position.
\end{lemma}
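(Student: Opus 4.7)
The plan is to exploit the correspondence between the stack of the classical recursive DFS and the chain of gray vertices maintained by the rotate-based implementation, and to track, for each currently gray vertex, what the head of its adjacency list must be.

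First I would set up a key invariant: at any moment during the execution, the gray vertices are exactly the vertices on the current ``DFS stack,'' i.e.\ the path from $s$ to the vertex currently being processed, and for every gray vertex $u$ other than the topmost one, the front element of $u$'s adjacency list is precisely the (unique) child of $u$ in the current DFS tree. This is because the only time the algorithm rotates $u$'s list is when it first enters $u$ and then each time it advances $u$ to a new white neighbor; once it descends into a child $c$ of $u$, no further rotation of $u$'s list occurs until the whole subtree rooted at $c$ is finished and control returns to $u$. The invariant follows by an easy induction on the number of forward/backtrack steps executed.

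Granted this invariant, both halves of the lemma fall out. When $w$ has just turned black, its parent $p$ in the DFS tree is still gray (every proper ancestor of $w$ remains gray until its subtree is fully explored), and since $w$ is $p$'s current child, $p$'s list still has $w$ at the front by the invariant — this gives existence. For uniqueness I would use the standard fact that in an undirected DFS there are no cross edges, so any gray neighbor $v$ of $w$ in the adjacency list is an ancestor of $w$; if $v \neq p$, then $v$ is a strict ancestor of $p$, and by the invariant the front of $v$'s list is the unique child $v'$ of $v$ on the path from $v$ down to $p$. Since $v'$ lies strictly above $w$ on the tree path, $v' \neq w$, so $v$ fails condition (b).

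The main subtlety, and the step I would take most care with, is condition (b) for the parent: one must argue that between the moment $p$ rotated $w$ to the front of its list and the moment $w$ turns black, no other rotation of $p$'s list occurs. This is exactly the content of the invariant above, but it needs a careful statement because along the way $p$ itself briefly becomes the ``top of stack'' vertex again upon each backtrack from a descendant; the point is that the algorithm as described only rotates $p$'s list when it is about to descend to a new white child, and at the instant $w$ becomes black this descent has not yet been attempted. Once this invariant is nailed down cleanly, the rest of the argument is essentially bookkeeping, and the same proof will transfer to directed graphs provided one reads the adjacency list as the out-neighbor list (the absence of cross edges in undirected DFS being replaced by the fact that $p$ is still the most recently entered gray in-neighbor of $w$).
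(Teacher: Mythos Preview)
Your proposal is correct and follows essentially the same line as the paper's proof: both rely on the invariant that every gray vertex keeps its current DFS-tree child at the front of its list, so non-parent gray ancestors of $w$ point to the next vertex on the path (hence fail (b)), while descendants of $w$ are already \black\ (hence fail (a)). Your version states the invariant more explicitly and handles the timing of rotations more carefully; the paper also notes the edge case of a leaf child that may have $w$ at its front but is \black, which you subsume by restricting the uniqueness argument to gray neighbors.
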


\begin{proof}
Among all the neighbors of $w$, some vertices are $w$'s children in the DFS tree, 
and the rest of them are $w$'s ancestors, and among the ancestors, exactly one 
vertex is $w$'s parent in the DFS tree. All the ancestors should have their currently 
explored (gray) child at the first position in their adjacency list; and this current child 
would be different from $w$ for all the ancestors except $p$ (as $w$ was 
discovered from $p$). So, the second condition is violated for them. All of $w$'s 
children have been fully processed earlier and have been colored \black, and hence 
the first condition is violated for them. Observe that, if $w$ has a child, say $k$, which is a leaf in the DFS tree, it might 
happen that $k$ also has $w$ at the first position in its current adjacency list, but, 
fortunately, $k$ is \black\ while scanning $w$'s list. So for such vertices, the first 
condition gets violated. Only for $w$'s parent, which is $p$ here, both the 
conditions are satisfied. 
\end{proof}

So, the parent can be found by by scanning the $w$'s list, to find a 
neighbor $p$ that is colored \gray\ such that the first element in $p$'s list is $u$. 
This completes the description of the backtracking step. 
Once we backtrack to $p$, we find the next \white\ vertex (as in the forward step) 
and continue until all the vertices of $G$ are explored. Other than some constant number of variables, clearly the space usage is only for storing the color array $C$.
Since $C$ is of length $n$ where each element has $3$ possible values, $C$ 
can be encoded using $n \lg 3 + O(\lg^2 n)$ bits, so that the $i$-th element in $C$
can be read and updated in $O(1)$ time~\cite{DodisPT10}.
So overall space required is $n \lg 3 + O(\lg^2 n)$ bits. 
As the algorithm systematically brings a \white\ vertex to the front, makes it \gray, and 
moves it to the end after it becomes \black, at most two full rotations of each of the list may happen
(the second one to determine that there are no more \white\ vertices) resulting in a linear time lex-DFS algorithm. We discuss the lex-DFS algorithm for the directed graphs below.

\subsection{Proof of Theorem~\ref{dfsrotate}(a) for directed graphs}\label{d1}
Recall that we have access to both the in-adjacency and the out-adjacency lists for each vertex $w$ in a directed graph $G$, hence we can use these lists separately for performing two steps of DFS. I.e., out-adjacency list is used for the exploration of DFS in the forward direction and the in-adjacency list is used for finding parent of a node during the backtracking step. We provide the details below. Similar to our algorithm for undirected graphs, in the forward direction, we scan the out-neighbor list of $w$ to find the next \white\ neighbor and proceed. Once the out-neighbor list of $w$ is fully processed, we need to backtrack from $w$. Towards that we first have to identify $w$'s parent. In order to do so we use the following lemma whose proof follows along the same lines as the Lemma~\ref{backtrack} above. Hence we omit the proof.

\begin{lemma}
Suppose $w$ is a node that just became \black. Then its parent $p$ is the 
unique vertex in $w$'s in-adjacency list which is (a) gray and (b) whose 
current out-adjacency list has $w$ in the first position.
\end{lemma}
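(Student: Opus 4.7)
My plan is to adapt the argument of Lemma~\ref{backtrack} by replacing the undirected neighborhood by $w$'s in-adjacency list and by accounting for the additional edge types that arise in directed DFS. Let $w$ be the vertex that has just turned \black\ and let $p$ denote its DFS parent. I would first verify that $p$ satisfies both conditions: since $p\to w$ is an edge, $p$ appears in $w$'s in-adjacency list; $p$ remains on the DFS stack and so is \gray; and because the forward step always keeps the currently explored child at the front of its out-list and the algorithm has not yet advanced past $w$ in $p$'s out-list, $w$ sits at the front of $p$'s out-list at the moment $w$ becomes \black.

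For uniqueness I would consider an arbitrary $v \neq p$ in $w$'s in-adjacency list and split on the DFS classification of the edge $v\to w$. For a forward edge, $v$ is a proper ancestor of $w$ distinct from $p$; it is still \gray, but the gray child currently at the front of $v$'s out-list is the vertex $v'\neq w$ on the tree path from $v$ to $w$, so condition~(b) fails. For a back edge, $v$ is a descendant of $w$, and every descendant of $w$ has already been finished and colored \black\ by the time $w$ turns \black, so condition~(a) fails. For a cross edge, a standard property of directed DFS gives that $w$ is finished strictly before $v$ is ever discovered, so $v$ is still \white\ at this moment and (a) again fails. I would also flag the corner case parallel to Lemma~\ref{backtrack}: a tree child $k$ of $w$ may sit in $w$'s in-adjacency list via a back edge $k\to w$, but as a descendant of $w$ such a $k$ is already \black, so~(a) excludes it.

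The only real subtlety compared with the undirected case is keeping the case analysis exhaustive: Lemma~\ref{backtrack} had only to rule out descendants (\black) and non-parent ancestors (wrong front-of-list element), whereas here I must also isolate cross edges, which do not arise in undirected DFS, and track carefully which of the four directed edge types each in-neighbor can give. Once the four cases are dispatched as above, $p$ is the unique vertex of $w$'s in-adjacency list satisfying both (a) and~(b), which is exactly the statement of the lemma.
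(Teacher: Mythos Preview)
Your proposal is correct and is exactly the detailed argument the paper has in mind: the paper omits the proof entirely, remarking only that it ``follows along the same lines as Lemma~\ref{backtrack},'' and your case split on forward/back/cross in-edges is precisely the natural directed elaboration of that undirected argument. The only minor redundancy is that your flagged corner case (a tree child $k$ with a back edge $k\to w$) is already subsumed by your back-edge case, so you can drop it.
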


Once we figure out $w$'s parent $p$, DFS backtracks to $p$, finds the next \white\ neighbor (as done in the forward step) and continues until all the vertices are exhausted. It is clear that this procedure performs lex-DFS on a directed graph $G$ correctly in linear time, and this completes the proof of Theorem~\ref{dfsrotate}(a).

\subsection{Proof of Theorem~\ref{dfsrotate}(b) for undirected graphs}\label{dfs2}
To improve the space further, we replace the color array $C$ with a bit array $visited[1,\dots,n]$ which stores a $0$ for an unvisited vertex (\white ), and a $1$ for a visited vertex (\gray\ or \black ). 
%

First we need a test similar to that in the statement of Lemma~\ref{backtrack} without the distinction of \gray\ and \black\ vertices to find the parent of a node. 
Due to the invariant we have maintained, every internal vertex of the DFS tree will point to (i.e. have as first element in its list) its last child. So the nodes that could potentially have a node $w$ in its first position are its parent, and {\it any} leaf vertex. Hence we modify the forward step in the following way.


Whenever we visit an unvisited vertex $v$ for the first time from another vertex $u$ (hence, $u$ is the parent of $v$ in the DFS tree and $u$'s list has $v$ in the first position), we, as before, mark $v$ as visited and in addition to that, we rotate $v$'s list to bring $u$ to the front (during this rotation, we do not mark any intermediate nodes as visited). 
Then we continue as before (by finding the first unvisited vertex and bringing it to the front) in the forward step. 
Now the following invariants are easy to see and are useful. \\
\noindent
{\bf Invariants:} During the exploration of DFS, in the (partial) DFS tree
\begin{enumerate}[noitemsep,topsep=0pt]
\item
any internal vertex has the first element in its list as its current last child; and
\item
for any leaf vertex of the DFS tree, the first element in its list is its parent.
\end{enumerate}
The first invariant is easy to see as we always keep the current explored vertex (child) as the first element in the list. For leaves, the first time we encounter them, we make its parent as the first element in the forward direction. Then we discover that it has no unvisited vertices in its list, and so we make a full rotation and bring the parent to the front again. The following lemma provides a test to find the parent of a node.
 
\begin{lemma}\label{backtrack2}
Let $w$ be a node that has just become \black. Then its parent $p$ is the {\bf first} vertex $x$ in $w$'s list which is marked $1$ in the visited array, and whose current adjacency list has $w$ in the first position.
\end{lemma}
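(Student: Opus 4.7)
The plan is to classify $w$'s neighbors by their role in the current DFS tree, check condition~(b) class by class using the two invariants, and then use the structure of the algorithm's rotations to show $p$ is the first surviving candidate in $w$'s list.

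First I would partition $N(w)$ into the four standard classes for undirected DFS: the parent $p$; the children $c_1,\dots,c_k$ of $w$ (all now black); the other proper ancestors of $w$ (back-edge neighbours above); and the proper descendants of $w$ that are not children of $w$ (back-edge neighbours below). Since $w$ has just become black, every element of $N(w)$ is marked visited, so condition~(a) is automatic and the work is all in condition~(b). Using the invariants case by case: $p$ satisfies~(b) by invariant~1, because $w$ is currently $p$'s last-explored child; an ancestor $a\ne p$ has its last-explored child on the tree path $a\to\cdots\to w$ (hence not $w$ itself) at the head of its list, so $a$ fails; a non-child descendant $d$ has either its own last child (if internal) or its own tree-parent (if a leaf) at its head, and this head is never $w$ since $d$ is not a child of $w$; an internal child of $w$ fails by invariant~1. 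The delicate case is a leaf child of $w$, whose head is $w$ itself by invariant~2, so such leaves genuinely satisfy~(b). Therefore the candidate set is exactly $\{p\}\cup\{\text{leaf children of }w\}$, and the lemma reduces to showing that $p$ precedes every leaf child in the scan of $w$'s list.

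For the positional step, invariant~1 tells us that the head of $w$'s list at this moment is $c_k$; this known front element is skipped in the search for $p$ (this is the natural reading of ``first'' here, since invariant~1 already pins the head as a child rather than the parent). Writing $w$'s original cyclic list as $p, S_0, c_1, S_1, c_2, \ldots, c_k, S_k, p$, where each $S_i$ consists only of visited non-children of $w$ (otherwise the corresponding forward scan for the next child would have picked an unvisited vertex out of $S_i$ rather than moving on to $c_{i+1}$), and applying inductively the rotations that bring $c_1, c_2, \ldots, c_k$ successively to the head, one finds that when $c_k$ is at the head the current cyclic order is $c_k, S_k, p, S_0, c_1, S_1, \ldots, c_{k-1}, S_{k-1}$. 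Scanning after $c_k$ we therefore first traverse $S_k$ (all failing~(b) by the class analysis above), then reach $p$, and only afterwards encounter any of $c_1, \ldots, c_{k-1}$, leaves or not. Hence $p$ is the first post-head vertex of $w$'s list satisfying both~(a) and~(b), as claimed.

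The main obstacle, relative to the three-colour version (Lemma~\ref{backtrack}), is precisely that the two invariants alone cannot separate $p$ from the leaf children of $w$, both of which carry $w$ at the front of their own lists; closing the argument requires the rotation-based positional observation above, which uses the precise scheduling of rotations in the forward step to wedge $p$ strictly between $c_k$ and the block of earlier-discovered children in the current cyclic order of $w$'s list.
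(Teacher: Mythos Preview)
Your proof is correct and follows the same two-step strategy as the paper's: first use the two invariants to narrow the vertices satisfying condition~(b) down to $\{p\}\cup\{\text{leaf children of }w\}$, then invoke the fact that $w$'s list was explored starting from $p$ to conclude that $p$ precedes those children. Your version is in fact more thorough than the paper's one-line positional argument (``as we began the exploration of its neighbors starting from its parent, its parent will appear first before its children''): you write out the cyclic order $c_k, S_k, p, S_0, c_1,\ldots$ explicitly and confront the corner case of the last child $c_k$ sitting at the head, which the paper leaves implicit.
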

 \begin{proof}
 From the invariants we observed, the nodes that can potentially have $w$ in the first position of their lists are its parent and its children that happen to be leaves. But in $w$'s list, as we began the exploration of its neighbors starting from its parent, its parent will appear first before its children. Hence the first node in $w$'s list which has $w$ in the first position must be its parent.
 \end{proof}
 
Once we backtrack to $p$, we find the next \white\ vertex, and continue until all the vertices of $G$ are explored. Overall this procedure takes linear time. As we rotate the list to bring the parent of a node, before exploring its \white\ neighbors, we are not guaranteed to explore the first \white\ vertex in its original list, and hence we loose the lexicographic property. We provide our DFS algorithm for directed graphs below.

\subsection{Proof of Theorem~\ref{dfsrotate}(b) for directed graphs}\label{d2}
For performing DFS in directed graphs using  $n+O(\lg n)$ bits, we don't even need to apply the modifications as we did for the undirected graphs during the forward direction, and we can essentially use the same forward step idea as used for lex-DFS in undirected graphs of Section~\ref{dfs1}. We provide the details below. When we arrive at a previously unvisited vertex $v$ from the vertex $u$ (hence $u$ is the parent of $v$ in the DFS tree), we rotate the in-neighbor list of $v$ to bring $u$ to the front and $u$ stays there during the entire course of the exploration. Thus we maintain the invariant that for any visited node $v$, the first element in its in-neighbor list is its parent in the DFS tree. Now the algorithm scans $v$'s adjacency list to find its unvisited neighbor. (i) If we find such a vertex, say $w$, then we rotate $v$'s list to make $w$ as the first element, and mark $w$ visited.
(ii) If we do not find any such unvisited neighbor of $v$, then DFS needs to {\em backtrack} to its parent. From the invariant we maintain in the in-neighbor list of every visited vertex, this is easy. All we need to do is to see the first entry in $v$'s in-neighbor list to retrieve its parent $u$ and then continue from $u$. Overall this procedure takes linear time. Upon closer inspection, it can be seen that, as we are not modifying the ordering of the vertices in the out-neighbor lists in the forward direction (in contrast with the undirected graph algorithm of Section~\ref{dfs2}), this procedure actually traverses the directed graph $G$ in lex-DFS ordering. This completes the proof of Theorem~\ref{dfsrotate}(b).

\subsection{Proof of Theorem~\ref{dfsrotate}(c) for undirected graphs}\label{subsec:dfsrotate-logn}
Now to decrease the space to $O(\lg n)$, we dispense with the color/visited array, and give tests to determine \white, \gray\ and \black\ vertices. For now, assume that we can determine the color of a vertex. 
The forward step is almost the same as before except performing the update in the color array. I.e., whenever we visit a \white\ vertex $v$ for the first time from another vertex $u$ (hence $u$ is the parent of $v$), we rotate $v$’s list to bring $u$ to the front. Then we continue to find the first \white\ vertex to explore. We maintain the following invariants. (i) any \gray\ vertex has the first element in its list as its last child in the (partial) DFS tree; (ii) any \black\ vertex has its parent as the first element in its list. We also store the depth of the current node in a variable $d$, which is incremented by $1$ every time we discover a \white\ vertex and decremented by $1$ whenever we backtrack. We maintain the maximum depth the DFS has attained using a variable $max$. At a generic step during the execution of the algorithm, assume that we are at a vertex $x$'s list, 
let $p$ be $x$'s parent and let $y$ be a vertex in $x$'s list. We need to determine the color of $y$ and continue the DFS based on the color of $y$. We use the following characterization. 

\begin{lemma}\label{backtrack3}
Suppose the DFS has explored starting from a source vertex $s$, up to a vertex $x$ at level $d$. Let $p$ be $x$'s parent.
Note that both $s$ and $x$ are \gray\ in the normal coloring procedure.
Let $max$ be the maximum level of any vertex in the partial DFS exploration. Let $y$ be a vertex in $x$'s list. Then,
\begin{enumerate}[noitemsep,topsep=0pt]
\item
$y$ is \gray\ (i.e., $(x,y)$ is a back edge, and $y$ is an ancestor of $x$) if and only if we can reach $y$ from $s$ by following through the \gray\ child 
(which is at the front of a \gray\ node's list) path in at most $d$ steps.
\item
$y$ is \black\ (i.e., $(x,y)$ is a back edge, and $x$ is an ancestor of $y$) if and only if
\begin{itemize}
\item
 there is a path $P$ of length at most $(max - d)$ from $y$ to $x$ (obtained by following through the first elements of the lists of every vertex in the path, starting from $y$), and
\item
let $z$ be the node before $x$ in the path $P$. The node $z$ appears after $p$ in $x$'s list.
\end{itemize}
\item
$y$ is \white\ if $y$ is not \gray\ or \black.
\end{enumerate}
\end{lemma}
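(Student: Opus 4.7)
The plan is to prove the three characterizations in turn, exploiting the two structural invariants the algorithm maintains in the $O(\lg n)$-space setting: (i) every \gray\ vertex has its most-recently-explored child at the head of its adjacency list, and (ii) every \black\ vertex has its DFS-tree parent at the head of its list. These invariants collapse the missing color information into purely positional criteria that can be checked by walking first-pointers. Part (1) follows directly from (i): the \gray\ vertices at the current instant are exactly the DFS-path vertices $s=g_0,g_1,\ldots,g_d=x$, and (i) guarantees that the first element of $g_i$'s list is $g_{i+1}$ for every $i<d$. Hence iteratively reading the first pointer from $s$ enumerates precisely this chain in $d$ steps, proving both directions of (1).

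For the forward direction of Part (2), assume $y$ is \black. Since $y$ is a neighbor of $x$ that has been finished, $y$ must lie in a fully-explored subtree rooted at some child $c_i$ of $x$. Iteratively applying invariant (ii) starting from $y$, the first-pointer chain walks strictly upward through the \black\ ancestors of $y$ and halts at the first non-\black\ ancestor, which is $x$. The length of this walk is the depth of $y$ minus $d$, which is at most $max-d$, giving condition (a), and its penultimate vertex is the child $z=c_i$. To argue that $z$ lies strictly after $p$ in $x$'s current list, I will use the following facts about the algorithm's rotations: when $x$ was first visited, its list was rotated to place $p$ at the head; the children $c_1,c_2,\ldots$ of $x$ were thereafter exposed at the head one by one in cyclic-order sequence by successive forward scans; and whenever $x$ becomes the current DFS node between finishing one child and discovering the next, $p$ is restored to the head of $x$'s list. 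Since circular rotations preserve the underlying cyclic order, in the current linearization starting from $p$ every $c_i$ lies strictly after $p$, establishing (b).

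For the converse direction of Part (2) and for Part (3), I would rule out misclassification by case analysis on the true color of $y$. If $y$ is \gray\ at depth $k$, invariant (i) yields the walk $y=g_k\to g_{k+1}\to\cdots\to g_d=x$, whose penultimate vertex is $g_{d-1}=p$; since $p$ is not strictly after $p$, condition (b) fails. If $y$ is \white, its list has never been rotated and its first pointer still points to its original first neighbor; I would trace the walk and argue that either it fails to reach $x$ within $max-d$ steps (violating (a)), or, upon entering a visited region, it is forced by (i) and (ii) to either descend along the gray path (arriving with predecessor $p$) or ascend from deep inside a single finished subtree whose depth makes the total walk length exceed the budget $max-d$. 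Part (3) is then immediate from the trichotomy of colors. The main obstacle is exactly this \white\ case of the converse direction of (2): eliminating coincidental walks from an unvisited $y$ through finished territory to $x$ with a well-placed $z$ requires careful joint use of both invariants together with the fact that \white\ vertices' adjacency lists remain in their initial configurations.
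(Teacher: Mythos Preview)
Your argument for condition (b) in the forward direction of Part (2) rests on a false premise. The third ``fact'' you invoke --- that ``$p$ is restored to the head of $x$'s list'' whenever $x$ becomes the current node between finishing one child and discovering the next --- directly contradicts invariant (i), which states that any \gray\ vertex has its \emph{last child} at the head of its list, not its parent. When DFS backtracks from $c_j$ to $x$, the head of $x$'s list remains at $c_j$; the scan for the next white neighbour proceeds via a temporary pointer from $c_j$, and no rotation brings $p$ back. Hence your phrase ``in the current linearization starting from $p$'' misdescribes the actual head position, and your conclusion that every $c_i$ lies strictly after $p$ does not follow from the facts you list.

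The paper's (terser) argument is the correct replacement: since $p$ was brought to the front first and the children $c_1,c_2,\ldots$ were then discovered by scanning forward in cyclic order from $p$, the already-finished children $c_1,\ldots,c_{j-1}$ occupy cyclic positions strictly between $p$ and the current head $c_j$; in the linearization from the head $c_j$ this segment lies \emph{after} $p$, which is what (b) requires. Your treatment of the gray case of the converse --- observing that the first-pointer walk from a gray $y$ descends the gray chain to $x$ with penultimate vertex $g_{d-1}=p$, so (b) fails because $p$ is not strictly after itself --- is correct and in fact sharper than what the paper states. Your acknowledgment that the white case of the converse is the delicate one is fair; the paper dispatches it in a single clause (``the unexplored neighbors of $x$ appear before $p$''), so your more detailed sketch is reasonable, but it still needs the corrected picture of where the head of $x$'s list actually sits.
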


 \begin{proof}
 The test for \gray\ and \white\ vertices is easy to see. The vertex $y$ is \black\ implies that $y$ is a descendant of $x$ in the partially explored DFS tree.
 This means that there is a path of length at most $(max - d)$ (obtained by following the parent which is in the first element of the adjacency list) from $x$ to $y$ through an already explored child $z$ . 
 By the way we process $x$'s list, we first bring the parent to the front of the list, and then explore the nodes in sequence, and hence $z$, the explored neighbor of $x$ must appear after $p$ in $x$'s list. Conversely, the unexplored neighbors of $x$ appear before $p$ in $x$'s list.
 \end{proof}

Now, if we use the above claim to test for colors of vertices, testing for \gray\ takes at most $d$ steps. Testing for \black\ takes at most $(max -d)$ steps to find the path, and at most $deg(x)$ steps to determine whether $p$ appears before. Thus for each vertex in $x$'s list, we spend time proportional to $max + deg(x)$. So, the overall runtime of the algorithm is
$\sum_{v\in V} deg(v)(deg(v)+\ell)=O(m^2/n+m\ell)$, where $\ell$ is the maximum depth of DFS tree. Maintaining the invariants for the \gray\ and \black\ vertices are also straightforward. We provide the details of our log-space algorithm for directed graphs below.

\subsection{Proof of Theorem~\ref{dfsrotate}(c) for directed graphs}\label{d3}
We describe our $O(\lg n)$ bits algorithm for directed graphs in the \rotated\ model. More specifically, we give a DFS algorithm 
to output all vertices reachable by a directed path from the source vertex $s$. If we assume that $s$ can reach all vertices, we get to output all vertices.
In the preprocessing step, the algorithm spends $O(m)$ time to bring the minimum valued neighbor (denote it by $min$) in the out-neighbor list of every vertex by rotation (hence we loose the lexicographic DFS property). For now assume that we can determine the color of a vertex. Given this, in the forward direction, when DFS arrives at a previously unvisited vertex $v$ from the vertex $u$ (hence $u$ is the parent of $v$ in the DFS tree), we rotate the in-neighbor list of $v$ to bring $u$ to the front and $u$ stays there during the entire course of the exploration. Also in $u$'s out-neighbor list, $v$ is made the first location. Hence we maintain the following invariants.\\
\noindent
{\bf Invariants:} During the exploration of DFS, in the (partial) DFS tree
\begin{enumerate}[noitemsep,topsep=0pt]
\item
\gray\ vertices have their current last child in the first location of their out-neighbor lists;
\item
all the visited (i.e., \gray\ and \black) vertices have their parent in the first location of their in-neighbor lists.
\end{enumerate}
We also keep track of the depth of the current node (i.e., the last \gray\ vertex in the \gray\ path of the DFS tree) in a variable $d$, which, as before, is incremented by $1$ every time DFS visits a \white\ vertex and decremented by $1$ whenever DFS backtracks. We also store the maximum depth the DFS tree has attained so far in a variable $max$. At a generic step during the execution of the algorithm, assume that we are at a vertex $x$'s list, let $p$ be $x$'s parent (which can be found from the way $x$ is visited by a forward or a backtracking step using the invariants being maintained) and let $y$ be a vertex in $x$'s list. 
We need to determine the color of $y$ and continue the DFS based on the color of $y$ and maintain the invariants. 
We use the following characterization.

\begin{figure}[h]
 \begin{center}
 \includegraphics[scale=.6, keepaspectratio=true]{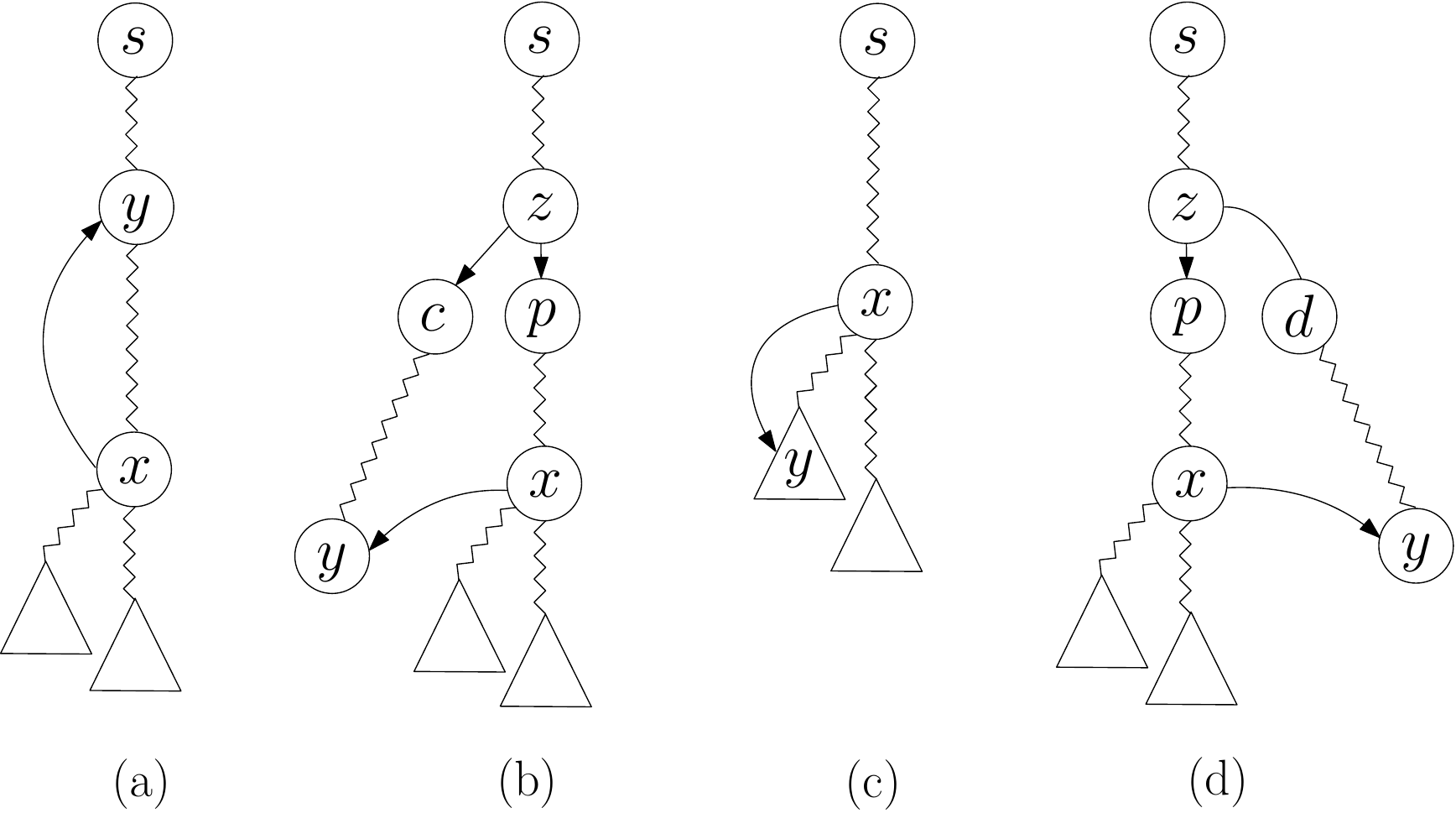}
 \end{center}
 \caption{Illustration of the different cases of the possible positions of the vertex $y$ when DFS considers the directed edge $(x,y)$ at some intermediate step. Suppose the root of the DFS tree is the vertex $s$ and the curvy path starting from $s$ and going straight below through $x$ is the current \gray\ path in the DFS tree. Intuitively all the vertices on the left hand side of the path are black, and right hand side are white and yet to be explored. From left to the right are cases when $(x,y)$ is (a) back edge, (b) cross edge, (c) forward edge, and (d) tree edge.}
\end{figure}

\begin{lemma}
Suppose the DFS has explored starting from a source vertex $s$ up to a vertex $x$ at level $d$. Let $p$ be $x$'s parent. Note that both $s$ and $x$ are \gray\ in the normal coloring procedure. Let $max$ be the maximum level of any vertex in the partial DFS exploration. Let $y$ be a vertex in $x$'s list. Then,
\begin{enumerate}[noitemsep,topsep=0pt]
\item
$y$ is \gray\ (i.e., $(x,y)$ is a back edge and $y$ is an ancestor of $x$) if and only if we can reach $y$ from $s$ following the \gray\ child (which is in the first location of each of the out-neighbor lists of \gray\ nodes) path in at most $d$ steps.
\item
$y$ is \black\ 
if and only if any of the following happens.
\begin{itemize}
\item
There is a path $P$ of length at most $(max - d)$ from $y$ to $x$ obtained by following the first elements of the in-neighbor lists of every vertex in the path $P$ starting from $y$. This happens when $(x,y)$ is a forward edge, and $x$ is an ancestor of $y$.
\item
There is a path $P$ of length at most $max$ from $y$ to a gray vertex $z \neq x$ (obtained by following through the first elements of the in-neighbor lists of every vertex starting from $y$) which is the first gray vertex in the path. 
Let $c$ be the node before $z$ in the path $P$, then $c$ must appear after $min$ in $z$'s list (this happens when $(x,y)$ is a cross edge).
\end{itemize}
\item
The vertex $y$ is \white\ if it is not \black\ or \gray (i.e., $(x,y)$ is the next tree edge with $x$ being the parent of $y$ in the DFS tree). 
\end{enumerate}
\end{lemma}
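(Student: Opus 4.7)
The plan is to prove each part using the two invariants the algorithm maintains — gray vertices keep their current last child at the front of their out-neighbor list, and every visited vertex keeps its DFS-tree parent at the front of its in-neighbor list — together with the preprocessing fact that $min$ sits at the front of each out-neighbor list.

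For part 1 (the gray test), observe that the gray vertices form a single directed path $s = u_0, u_1, \ldots, u_d = x$ in the partial DFS tree, and by the first invariant the front of $u_i$'s out-neighbor list is $u_{i+1}$ for $0 \leq i < d$. Hence the walk that starts at $s$ and repeatedly takes the front of the current vertex's out-neighbor list traces exactly the gray path, so $y$ is gray iff it appears within the first $d$ steps of this walk. For part 2 (the black test) I would use the second invariant to simulate walking up the DFS tree from $y$ by repeatedly following the front of the in-neighbor list. In the forward-edge sub-case, $y$ is a descendant of $x$ and every tree vertex strictly between $y$ and $x$ is an already-finished descendant of $x$, hence black; the first gray vertex reached is therefore $x$, and the walk has length $\mathrm{depth}(y) - d \leq max - d$. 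In the cross-edge sub-case, $y$ is black and not a descendant of $x$, so the first gray vertex reached is $\mathrm{LCA}(x,y)$, which is a vertex $z \neq x$ lying on the gray path, and the walk has length at most $\mathrm{depth}(y) \leq max$. Part 3 is then immediate: the three colors partition $V$ and parts 1 and 2 detect gray and black exactly, so $y$ is white iff neither test succeeds.

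The main hurdle I foresee is verifying the side condition in the cross-edge case that the vertex $c$ preceding $z$ on the upward walk appears after $min$ in $z$'s out-neighbor list. I would handle it by tracking how $z$'s front pointer has moved: preprocessing places $min$ at the front, and each subsequent forward rotation occurs precisely when DFS commits to exploring a new child of $z$; thus, reading cyclically from the current front, $z$'s list is the current (gray) child of $z$ on the gray path, then the not-yet-explored out-neighbors, then $min$, then the previously-finished children in visitation order. Since $c$ is a child of $z$ whose subtree contains the black vertex $y$, and since in the cross-edge case $c$ cannot be the current gray child of $z$ (otherwise $y$ would be a descendant of $x$), $c$ must fall in the cyclic segment strictly after $min$, which is exactly the stated condition. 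The converse of each sub-case — that any walk satisfying the conditions forces the claimed color on $y$ — then follows by reversing this structural description of $z$'s list and using that an edge is a back, cross, forward, or tree edge according to precisely the configuration identified.
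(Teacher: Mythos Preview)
Your approach mirrors the paper's: both use the invariant that gray vertices carry their current child at the out-list front to trace the gray path for part~1, and the invariant that visited vertices carry their parent at the in-list front to walk upward from $y$ for part~2, distinguishing the forward-edge and cross-edge sub-cases by whether the first gray vertex reached is $x$ or a proper ancestor $z$, and then using the position of $c$ relative to $min$ in $z$'s out-list. Your reading of the cyclic structure of $z$'s out-list (current gray child, then unscanned neighbors, then $min$, then already-processed children) is exactly the mechanism the paper relies on.

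The gap is your treatment of the converse. Saying it ``follows by reversing this structural description of $z$'s list'' does not explain why the in-front walk from a \emph{white} $y$ cannot spuriously satisfy the black test: that description concerns $z$'s out-list, not the nature of the vertices on the walk from $y$. The paper supplies the missing observation: a black vertex has already processed all of its out-neighbors, so no black vertex can have a white out-neighbor; equivalently, every in-neighbor of a white vertex is white or gray. This is what forces the in-front walk starting from a white $y$ to remain among white vertices until it first meets a gray $z$, so that the predecessor $c$ on that walk is itself white. Your own structural description of $z$'s out-list then places such a white $c$ in the not-yet-scanned segment, i.e.\ before $min$, and the side condition fails. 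Without this observation, nothing in your argument rules out the walk from a white $y$ passing through black vertices and arriving at $z$ (or at $x$) via a black predecessor, which would make the black test succeed incorrectly.
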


\begin{proof}
See Figure $2$
for a picture of all the cases. The test for \gray\ and \white\ vertices is easy to see.  

From a vertex $x$, there could be two types of outgoing edges to a black vertex $y$. When $(x,y)$ is a forward edge, 
$y$ is a descendant of $x$ and hence there must exist a path $P$ of length at most $(max - d)$ (obtained by following the parent which is in the first location of the in-neighbor list of every vertex in $P$, starting from $y$) from $y$ to $x$ through an already explored child $t$ of $x$. In the other case, when $(x,y)$ is a cross edge, $y$ has already been discovered and explored completely before DFS reaches to $x$. Hence there must exist a \gray\ ancestor $z$ of $x$ ($z$ could be $x$) such that $y$ belongs to the subtree rooted at $z$ in the DFS tree. Thus, from $y$'s in-neighbor list if we follow the path starting with $y$'s parent for at most $max$ steps, we must hit the \gray\ path and the first vertex we come across is $z$. Let $c$ be the node before $z$ in the path. By the way we process $z$'s out-neighbor list, we first bring the $min$ to the front of the list, and then explore the other neighbor nodes in sequence, and hence $c$, the explored neighbor of $z$ must appear after $min$ in $z$'s list. 

For the converse, suppose $y$ is a white vertex. Either we never reach a \gray\ vertex in $max$ steps (and we will correctly determin its color in this case) or we reach $x$ or $x$'s ancestor $z$ from $y$ following through the (spurious) first vertices of the in-neighbor list of a \white\ vertex $y$. Note that the parent of a \white\ vertex is \white\ or \gray\ and it can never be \black. Hence $z$'s child in the path is \white. Hence that child will appear before $min$ in $z$'s list.
\end{proof}
Given the above test, if $y$ turns out to be white, the edge $(x,y)$ is added to the DFS tree, and $y$ now becomes the current \gray\ vertex. Note that maintaining the invariants are straightforward. Also, when any vertex $v$ has been completely explored by DFS, we retrieve its parent from the in-neighbor list to complete the backtracking step. This procedure is continued until DFS comes back to the source vertex $s$. We stop at this point. This is because, note that, our proof breaks down in the case when DFS in a directed graph produces a forest and some cross edges go across these trees. In that case, if we follow the path starting from $y$'s parent, we would reach the root of the DFS tree containing $y$ and this is different from the tree where $x$ belongs to. As we cannot maintain informations regarding all such roots of these previously explored DFS trees, we might spuriously conclude that $y$ is unvisited even though it is not the case. Thus our algorithm produces the DFS tree containing only the vertices reachable from the source vertex $s$ via some directed path in $G$. We leave open the case for desigining such logspace algorithm for the general directed graphs.


Given the above lemma, if $y$ turns out to be white, the edge $(x,y)$ is added to the DFS tree, and $y$ now becomes the current \gray\ vertex. Note that maintaining the invariants are easy. 
When any vertex $v$ has been completely explored by DFS, we retrieve its parent from the in-neighbor list to complete the backtracking step. This procedure is continued until DFS comes back to the source vertex $s$. We stop at this point and we have outputted all vertices reachable from $s$.


To analyse the running time of our algorithm observe that testing for \gray\ takes at most $d$ steps. 
Testing for \black\ takes, in the worst case, at most $max$ steps to find the path, and at each step of the path, we take $d$ time to test whether the new node is \gray. Once we reach a \gray\ vertex, we spend at most $deg(z)$ steps to determine whether 
$c$ appears before $min$. Thus for each vertex in $x$'s list, we spend time proportional to $d + (d.max) + deg (z)$ time.
As $z$ (which is independent of $x$) can have degree at most $n$.
Thus, the overall runtime of the algorithm is
$\sum_{v\in V} deg(v) (d + d \ell + n)$ which is $O(m (n + (1 + \ell) \ell)$ which is $O(m (n+\ell^2))$, where $\ell$ is the maximum depth of DFS tree.

\section{BFS algorithms in the \rotated\ model}\label{bfs_rotate_appendix}

\subsection{BFS using $n+O(\lg n)$ bits--Proof of Theorem~\ref{bfsrotate}(a)} 
\label{subsec:rotatebfs}
It is well-known that BFS actually computes the shortest path lengths in unweighted undirected or directed graph $G$ from a given source vertex $s\in V$ to every vertex $v\in V$ that is reachable from $s$. I.e., if a vertex $v$ belongs to the $d$-th level in the BFS tree (assuming the root $s$ is at zero-th level), then we know that the length of the shortest path from $s$ to $v$ is $d$.
We use this crucially to design our BFS algorithms. We use a bit array $visited[1,\cdots,n]$ that stores a $0$ for an unvisited vertex, and $1$ for a visited vertex. We also maintain a counter $dist$ which stores the level of the vertex that is currently being explored in the BFS algorithm.

We start by setting $visited[s] = 1$, and initializing the counter $dist$ to $0$. At the next step, for every unvisited neighbor $v$ of $s$, we rotate their adjacency list so that $s$ appears as the first element in $v$'s list, set $visited[v] = 1$, and output $v$.
This step ensures that for each visited vertex, its parent is at the front of its adjacency list. We refer to this front element in the adjacency list of a visited vertex as its {\em parent pointer}. (Also, once we set the parent pointer for a vertex, we will not rotate its adjacency list in the remaining part of the algorithm.)
Once the root $s$'s list is fully processed as above, the $dist$ is incremented to $1$. The next step in the algorithm is to find all the vertices in the first level and mark all their unvisited neighbors as visited. As we haven't stored these vertices (in the first level), the challenge is to find them first. We use the following claim, to find the level number of a visited vertex.
The proof easily follows from the fact that the parent pointers are set for all the visited vertices, and that all the ancestors of a visited vertex are also visited.

\begin{claim}
\label{dclaim}
If the BFS has been explored till distance $d$ from the source, then for any $k \le d$, a vertex $x$ marked visited is in level $k$ if and only if we can reach the source $s$ in exactly $k$ steps by following through their parent pointers. Thus determining if a visited vertex is in level $d$ takes at most $d$ steps.
\end{claim}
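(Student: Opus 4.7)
The plan is to prove both directions of the biconditional by induction on $k$, relying on a single invariant that the algorithm has already been shown to maintain: once a vertex $v$ is first marked visited (when some neighbor $u$ discovers it), the adjacency list of $v$ is rotated to place $u$ at the front, and the description explicitly says this front pointer is never changed again. So the ``parent pointer'' at $v$ always points to the vertex in the BFS tree that is one level closer to $s$. This invariant is the only fact about the algorithm I will need.

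For the forward direction, I will argue by induction on the level $k$ of $x$. The base case $k=0$ is immediate since the only level-$0$ vertex is $s$ itself, and zero parent-pointer hops keep us at $s$. For the inductive step, let $x$ be a visited vertex at level $k \geq 1$. By the algorithm, $x$ was first discovered while processing some vertex $u$ during the round in which $dist = k-1$, so the parent pointer of $x$ points to $u$, and $u$ sits at level $k-1$. By the inductive hypothesis, following parent pointers from $u$ reaches $s$ in exactly $k-1$ steps; prepending the edge $x \to u$ gives a length-$k$ parent-pointer walk from $x$ to $s$. For the converse, suppose following parent pointers from $x$ reaches $s$ in exactly $k$ steps. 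Let $y$ be the vertex after one parent-pointer hop from $x$; then $y$ reaches $s$ in $k-1$ hops, and by induction $y$ is at level $k-1$. Since the parent pointer of $x$ points to its BFS-tree parent, $x$ lives at level $k$.

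A small subtlety worth flagging: the induction needs $y$ (the parent-pointed vertex of $x$) to already be marked visited at the moment we test $x$, since the inductive hypothesis is stated only for visited vertices. This is automatic because $y$ was marked visited strictly before $x$ (indeed, $y$'s having $x$ as a child in the tree implies $y$ was being processed when $x$ was discovered), so at any point in the execution after $x$ is visited, all ancestors of $x$ in the partial BFS tree have been visited and have their parent pointers correctly set.

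The running-time claim then falls out directly: to test whether a visited vertex $x$ lies in level $d$, one simply walks from $x$ using parent pointers, counting hops, and stops either upon reaching $s$ (accepting iff the hop count equals $d$) or after $d$ hops without reaching $s$ (rejecting). Each hop costs $O(1)$ because the parent is at the front of the adjacency list, so the total cost is $O(d)$ time using only a constant number of extra pointers and counters, i.e.\ $O(\lg n)$ additional bits. The main obstacle is purely bookkeeping---verifying that the invariant ``parent pointer at front, never overwritten'' holds throughout the rest of the BFS procedure described in Section~\ref{subsec:rotatebfs}; once that invariant is in hand the claim is an immediate induction.
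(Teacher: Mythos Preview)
Your proof is correct and follows essentially the same approach as the paper. The paper itself gives only a one-line justification---``The proof easily follows from the fact that the parent pointers are set for all the visited vertices, and that all the ancestors of a visited vertex are also visited''---and your induction on $k$ is simply a careful unpacking of exactly those two facts, including the subtlety (which the paper's second clause covers) that the parent of a visited vertex is itself visited.
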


So now we continue the BFS by scanning through the vertices, finding those vertices in level $d$ (using the above claim by spending $d$ steps for each vertex), and marking their unvisited neighbors visited, and making in their adjacency lists, their parent vertex as the first element, and incrementing $dist$. We stop our algorithm when we discover no new unvisited vertex while exploring any level. The correctness of the procedure and the space used by the algorithm are clear. To analyze the runtime, note that the time spent at level $d$ is $nd + \sum_{i \in V(d)} deg(i)$ where $V(d)$ is the set of vertices in level 
$d$ and $deg(i)$ is the degree of vertex $i$. Summing over all levels, we get a runtime $O(m + n \ell^2)$, where $\ell$ is the depth of the BFS tree.
%
To handle directed graphs, we follow the outneighbor list as we go down, and we set up the parent at the first position in the in-neighbor list of every vertex $v$. To verify if $v$ belongs to the $d$-th level, we take $d$ steps from $v$ by following the parent pointers in the in-neighbor lists of the (visited) vertices along the path, and check if we reach $s$ at the end. This completes the proof of Theorem~\ref{bfsrotate}(a).

\subsection{BFS using $O(\lg n)$ bits--Proof of Theorem~\ref{bfsrotate}(b)}\label{subsec:rotatebfs-logn}
To reduce the space to $O(\lg n)$ bits, we dispense with the color array and explain how to determine visited and unvisited vertices. Assume that we can determine this in constant time.
Our first observation is that Claim~\ref{dclaim} is true even for unvisited vertices even though 
the first vertex in the adjacency list of unvisited vertices can be an arbitrary vertex (not necessarily referring to their parent in the BFS tree). 
However, we know (by the property of BFS) that no unvisited vertex is adjacent to a vertex in level less than $d$, and hence they can not reach $s$ by a path at most $d$. Using the same argument, we can show that
\begin{claim}
\label{claim2}
If vertices up to level $d$ have been explored and visited, then a vertex $x$ is a visited vertex if and only if by following through the parent pointers, $x$ can reach $s$ by a path of length at most $d$. Furthermore, a vertex is in level $d$ if and only if we can reach $s$ by a path of length exactly $d$ by following through the parent pointers.
\end{claim}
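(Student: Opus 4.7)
The plan is to prove the two equivalences by carefully tracking the difference between visited vertices (whose adjacency lists have been rotated by the algorithm so that the BFS parent sits at the front) and unvisited vertices (whose adjacency lists are untouched, so their front element is essentially arbitrary). First I would record the invariant from Section~\ref{subsec:rotatebfs}: when a vertex $v$ is visited for the first time from its discoverer $u$, the algorithm rotates $v$'s adjacency list to bring $u$ to the front and then freezes this front element. Hence for every visited vertex the front pointer is its true BFS parent, and iterating the parent pointer from a visited vertex at level $k\le d$ reaches $s$ in exactly $k$ steps. This immediately gives the easy direction of both equivalences: if $x$ is visited at level $k\le d$, then the parent-pointer chain reaches $s$ in exactly $k$ steps, so in particular within $d$ steps, and in exactly $d$ steps iff $k=d$.

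For the converse I would argue contrapositively: if $x$ is unvisited (after levels $0,\ldots,d$ have been processed) then the chain of front pointers from $x$ cannot reach $s$ in $\le d$ steps. The key structural fact I will invoke is the standard BFS property that after levels $0,\ldots,d$ have been processed, every neighbor of every vertex in level $\le d$ has already been marked visited. Consequently, an unvisited vertex has no neighbor at level $\le d$; its neighbors are either themselves unvisited or lie at level $\ge d+1$. In particular the (arbitrary) front element of $x$'s adjacency list points either into the unvisited set or into level $\ge d+1$.

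With this fact in hand I would trace the sequence $x=x_0,x_1,x_2,\ldots$ where $x_{i+1}$ is the front element of $x_i$'s adjacency list. Either the sequence never enters a visited vertex, in which case it never reaches $s$; or there is a smallest index $k\ge 1$ at which $x_k$ is visited. Since the edge from $x_{k-1}$ (unvisited) to $x_k$ must land at level $\ge d+1$ by the observation above, and from $x_k$ onward the pointers are true parent pointers reaching $s$ in exactly $\mathrm{level}(x_k)$ further steps, the total chain length from $x$ to $s$ is at least $k+(d+1)\ge d+2>d$. So no unvisited vertex reaches $s$ in $\le d$ steps, proving both "if" directions; and since a visited vertex at level $k\ne d$ reaches $s$ in exactly $k\ne d$ steps, "exactly $d$ steps" characterizes level $d$ precisely.

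The main (really the only) obstacle is controlling the front pointer at an unvisited vertex, which carries no algorithm-maintained meaning. The observation that unvisited vertices have no low-level neighbors acts as a "potential" argument: once the chain enters the visited region it is already at level $\ge d+1$, so it cannot climb back down to $s$ within the remaining budget. Everything else in the proof is a direct translation of the invariants from Section~\ref{subsec:rotatebfs} together with the argument already used in Claim~\ref{dclaim}.
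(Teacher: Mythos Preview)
Your approach is essentially the paper's (the paper gives only a one-line sketch built on precisely the BFS layering observation you expand), but there is an off-by-one slip in your ``key structural fact.'' At the moment described by the claim, the vertices at levels $0,\ldots,d$ have been \emph{marked} visited, but the level-$d$ vertices have not yet been \emph{processed}: we are about to locate them in order to mark their level-$(d{+}1)$ neighbors. Hence an unvisited vertex $x$ (necessarily at level $\ge d{+}1$) can perfectly well have a visited neighbor at level~$d$; what is true is only the weaker statement that $x$ has no neighbor at level $<d$ (equivalently $\le d{-}1$), which is exactly what the paper invokes.

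Fortunately this does not break your argument, only your arithmetic. With the corrected fact, when the chain $x_0,x_1,\ldots$ first enters the visited region at $x_k$, you only get $\mathrm{level}(x_k)\ge d$ (not $\ge d{+}1$), so the total number of steps to reach $s$ is at least $k+d\ge 1+d>d$, which is still sufficient for both directions of the claim. So the proof stands once you replace ``no neighbor at level $\le d$'' by ``no neighbor at level $<d$'' and adjust the final inequality accordingly.
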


Thus, to check whether a vertex is visited, we spend $O(d)$ time when exploring vertices at level $d$ instead of $O(1)$ time when the $visited$ array was stored explicitly.
Hence, the total the time spent at level $d$ is $O(n d + d \sum_{i \in V(d)} deg(i))$, where the first term gives the time to find all the visited vertices at level $d$, and the second term gives the time to explore those vertices (i.e., going through their neighbor lists, identifying the unvisited neighbors and setting their parent pointers). Summing over all levels, we get a total runtime of $O(n \ell^2 + m \ell)$. Note that this algorithm works only when the input undirected graph is connected as Claim~\ref{claim2} breaks down if there are more than one component. The modifications to handle the directed graphs are similar to those for the directed graph BFS algorithm.
This proves Theorem~\ref{bfsrotate}(b).

\section{Simulation of algorithms for \rotated\ model in the \implicit\ model}
\label{subsec:Rotate-Implicit}
The following result captures the overhead incurred while simulating 
any \rotated\ model algorithm in the \implicit\ model. Most of our algorithms in the \implicit\ model use these simulations often with some enhancements and tricks to obtain better running time bounds for some specific problems. 

\begin{theorem}\label{simulation_result}
Let $D$ be the maximum degree of a graph $G$. Then any algorithm running in $t(m,n)$ time in the \rotated\ model can be simulated in the \implicit\ model in (i) $O(D \cdot t(m,n))$ time when $G$ is given in an adjacency list, and (ii) $O(\lg D \cdot t(m,n))$ time when $G$ is given in an adjacency array. Furthermore, let $r_v(m,n)$ denote the number of rotations made in $v$'s (whose degree is $d_v$) list, and $f(m,n)$ be the remaining number of operations. Then any algorithm running in $t(m,n)=\sum_{v \in V} r_v(m,n) + f(m,n)$ time in the \rotated\ model can be simulated in the the \implicit\ model in (i) $O(\sum_{v \in V} r_v(m,n) \cdot d_v + f(m,n))$ time when $G$ is given in an adjacency list, and (ii) $O(\sum_{v \in V} r_v(m,n) \lg d_v + f(m,n))$ time when $G$ is given in an adjacency array. 
\end{theorem}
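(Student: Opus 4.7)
The plan is to describe two explicit simulations, one per input representation, and then derive the worst-case bounds by summation.

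For an adjacency-list input, every non-rotation primitive (scanning the list, reading the front, pointer traversal, comparisons of values) carries over verbatim from the \rotated\ model to the \implicit\ model, since both models share the same singly-linked-list backbone and the \implicit\ model imposes no restriction on reads. Only the unit-cost rotation needs work. I would simulate a single rotation of $v$'s list by performing $d_v-1$ adjacent-element swaps along the list: swap positions $(1,2)$, then $(2,3)$, and so on up to $(d_v-1, d_v)$. The cumulative effect moves the original head element to the tail while shifting every other element forward by one, which is exactly the semantics of a one-step circular rotation. Each swap is $O(1)$, so a single rotation costs $O(d_v)$; summing, the whole algorithm runs in $\sum_v r_v(m,n)\cdot d_v + f(m,n)$ time, and since $d_v\le D$ this is at most $O(D\cdot t(m,n))$, giving part (i).

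For an adjacency-array input, the plan is to avoid physically rotating the array and instead maintain, for each vertex $v$, a virtual-front pointer $p_v\in\{0,1,\ldots,d_v-1\}$ stored in-place inside $v$'s own adjacency array via the bit-encoding trick described in the techniques section. Concretely, I designate $2\lceil\lg d_v\rceil$ positions of $v$'s array and interpret every consecutive pair of distinct values $x<y$ as encoding bit $0$ when stored in the order $(x,y)$ and bit $1$ when stored in the order $(y,x)$; with this encoding, the full pointer is read or written in $O(\lg d_v)$ time using only swaps, while individual bit reads or flips take $O(1)$. A rotation then becomes: read $p_v$, update it to $(p_v+1)\bmod d_v$, and write it back, at total cost $O(\lg d_v)$. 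A request for the current virtual front reads $p_v$ once and then uses random access to index into the array in $O(1)$; for sequential scans, which are the dominant access pattern in our algorithms, this one-off pointer read is amortised over the $\Theta(d_v)$ subsequent accesses, so non-rotation work stays at $f(m,n)$. Summing, the total cost is $\sum_v r_v(m,n)\lg d_v + f(m,n)\le O(\lg D\cdot t(m,n))$, giving part (ii).

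The main obstacle I foresee is the low-degree case for the adjacency-array simulation: when $d_v$ is so small that $2\lceil\lg d_v\rceil$ positions do not fit inside $v$'s array, the pointer encoding is infeasible. For such vertices I would fall back on the direct simulation of rotations by adjacent swaps used for adjacency lists; this costs $O(d_v)=O(\lg n)$ per rotation, safely within the $O(\lg D)$ per-rotation budget, so the bound is not disturbed. A second, milder subtlety is that the bit-encoding procedure permutes values inside the encoding positions and therefore does not preserve the original appearance order of $v$'s neighbours; the rotate-model primitives themselves only rely on cyclic cycling and are unaffected, but this caveat should be flagged for order-sensitive procedures such as lex-DFS. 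Neither issue disturbs the asymptotic bounds claimed in the theorem.
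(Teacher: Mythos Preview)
Your treatment of part~(i) is correct and coincides with the paper's: a single rotation on $v$'s list is simulated by cyclically shifting the linked list via $d_v-1$ adjacent swaps, at cost $O(d_v)$, and the refined and coarse bounds follow by summation.

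For part~(ii), however, your virtual-pointer encoding has a genuine gap. The $2\lceil\lg d_v\rceil$ positions you commandeer for the encoding still hold neighbours of $v$ that the simulated algorithm must be able to read, and your scheme does not explain how to read them correctly. When you ``read the front'' by decoding $p_v$ and returning $a[p_v]$, and $p_v$ lands inside the encoding block, the value sitting at position $p_v$ has been permuted by the bit-encoding---and \emph{which} permutation is applied depends on $p_v$ itself. Concretely, take $d_v=4$ with array $[a,b,c,d]$, $a<b$, $c<d$, and encode the two bits of $p_v$ in the two pairs. Sweeping $p_v=0,1,2,3$ yields successive array states $[a,b,c,d]$, $[a,b,d,c]$, $[b,a,c,d]$, $[b,a,d,c]$, and the reported fronts $a[0],a[1],a[2],a[3]$ are $a,b,c,c$: the neighbour $d$ is never seen. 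So it is not true that ``the rotate-model primitives only rely on cyclic cycling and are unaffected''; because the permutation of the encoding block changes with every rotation, no fixed cyclic order is being presented to the simulated algorithm at all.

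The paper avoids this interference with a different device. It first sorts each adjacency array in place (linear-time in-place radix sort), and thereafter maintains the invariant that the array equals the sorted array with its first cell swapped with the current head. The ``pointer'' is then the rank of the element currently in the first cell, recoverable in $O(\lg d_v)$ time by a binary search in the almost-sorted remainder; reading the front is $O(1)$, and a rotation swaps the current head back to its sorted position and brings the next element to the front. The reason this works and yours does not is that sorting supplies a canonical reference order against which a single displaced element can be located; your encoding has no such reference, so you cannot undo its permutation when the head lands inside it. Your scheme could be repaired by sorting (at least the encoding block) up front and decoding each read against that order, but once you do that you are essentially at the paper's solution, and you additionally gain $O(1)$ front reads, which you need for the refined bound $\sum_v r_v\lg d_v + f(m,n)$ to hold without your amortisation caveat.
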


\begin{proof}
We can implement a single rotate operation that moves the head of the list by one position (which is assumed to be a unit-cost operation in the \rotated\ model) in the 
\implicit\ model by moving all the elements in the adjacency list circularly. If $d_v$ is the degree of a vertex $v$, then 
performing a rotation of the adjacency list of $v$ can be implemented in $O(d_v)$ time in the \implicit\ model. Thus, if we have 
an algorithm in the \rotated\ model that takes $t(m,n)$ time, then it can be implemented in $O(D \cdot t(m,n))$ time in the
\implicit\ model, where $D$ is the maximum degree of the graph. One can get a better runtime by analysing the algorithm 
for the \rotated\ model more carefully. In particular, if the runtime 
of the algorithm in the \rotated\ model can be expressed as $r(m,n) + f(m,n)$, where $r(m,n)$ is the number of rotations
performed and $f(m,n)$ is the remaining number of operations, then the algorithm can be implemented in the \implicit\ model in $O(D \cdot r(m,n)) + f(m,n)$ time.
Furthermore, if $r(m,n) \le \sum_{v \in V} r_v(m,n)$ where $r_v(m,n)$ is the number of rotations made in $v$'s list, then the runtime of the algorithm in the \implicit\ model can be 
bounded by $O(\sum_{v \in V} r_v(m,n) \cdot d_v + f(m,n))$.

If the input graph is given in the adjacency array representation  and if the ordering of the elements in the adjacency 
array is allowed to be changed, then one can simulate the rotate operation even faster.  
The main idea is to simulate the algorithm for the \rotated\ model after sorting the adjacency arrays of each vertex. Using
an in-place linear time radix sorting algorithm~\cite{FranceschiniMP07}, sorting all the adjacency arrays can be done in $O(m)$ time. Next, we
note that in the \rotated\ model, the head points to an element in an arbitrary position (called the front element) in the adjacency list of any vertex and the unit operation moves it one position. Thus, it is enough 
to show how to access the next element in sorted order. We maintain the following invariant: if the first element is the
$i$-th element in sorted order, then the adjacency array consists of the sorted array of all the adjacent vertices, with 
the first element swapped with the $i$-th element. To bring the $(i+1)$-st element to the front, we first perform a binary
search for the $i$-th element which is in the first position in the `almost sorted' adjacency array to find the position $i$, and move the elements appropriately to 
maintain the invariant (with the $(i+1)$-st element at the front). This takes $O(\lg d_v)$ time to simulate the rotation
of the adjacency array of a vertex $v$ with degree $d_v$. Thus, if we have an algorithm in the \rotated\ model that takes 
$t(m,n)$ time, then it can be implemented in $O(\lg D \cdot t(m,n))$ time in the \implicit\ model. Moreover, if the runtime 
of the algorithm in the \rotated\ model can be expressed as $\sum_{v \in V} r_v(m,n) + f(m,n)$, where $r_v(m,n)$ is an 
upper bound on the number of rotations performed on vertex $v$ and $f(m,n)$ is the remaining number of operations, then the algorithm can be implemented in the \implicit\ model in 
$O(\sum_{v \in V} r_v(m,n) \lg d_v + f(m,n))$ time.
\end{proof}

\section{DFS algorithms in the \implicit\ model---proof of Theorem~\ref{dfsimplicit}}
\label{subsec:implicitdfs}
To obtain a lex-DFS algorithm, we implement the $O(\lg n)$-bit DFS 
algorithm in the \rotated\ model, described in Section~\ref{subsec:dfsrotate-logn}, with a 
simple modification. 
First, note that in this algorithm (in the \rotated\ model), we bring the parent of a vertex to the 
front of its adjacency list (by performing rotations) when we visit a vertex for the first time. 
Subsequently, we explore the remaining neighbors of the vertex in the left-to-right order.
Thus, for each vertex, if its parent in the DFS were at the beginning of its adjacency list, then 
this algorithm would result in a lex-DFS algorithm. 
Now, to implement this algorithm in the \implicit\ model, whenever we need to bring the parent 
to the front, we simply bring it to the front without changing the
order of the other neighbors. Subsequently, we simulate each rotation by moving all the elements
in the adjacency list circularly. As mentioned in Section~\ref{subsec:Rotate-Implicit}, this results in 
an algorithm whose running time is $O(\sum_{v \in V} d_v (d_v + \ell) \cdot d_v) = O(m^3/n^2 + \ell m^2/n)$ if the graph is given in an adjancecy list and in $O(\sum_{v \in V} d_v (d_v + \ell) \cdot \lg d_v)  = O(m^2 (\lg n) / n + m \ell \lg n))$ when the graph is given 
in the form an adjacency array. This proves Theorem~\ref{dfsimplicit}(a) for undirected graphs. The results for the directed case follow from simulating the corresponding results for the directed graphs.

To prove the result mentioned in Theorem~\ref{dfsimplicit}(b), we implement the linear-time DFS algorithm of 
Theorem~\ref{dfsrotate} for the \rotated\ model that uses $n+O(\lg n)$ bits. 
This results in an algorithm that runs in $O(\sum_{v \in V} d_v^2 + n)  = O(m^2 / n)$ time 
(or in $O(\sum_{v \in V} d_v \lg d_v + n)  = O(m \lg m + n)$ time, when the graph is given 
as an adjacency array representation), using $n+O(\lg n)$ bits.
We reduce the space usage of the algorithm to $O(\lg n)$ bits by encoding the visited/unvisited 
bit for each vertex with degree at least $2$ within its adjacency list (and not maintaining this bit 
for degree-1 vertices). 
We describe the details below.

Whenever a node is visited for the first time in the algorithm for the rotated list model, we bring
its parent to the front of its adjacency list. In the remaining part of the algorithm, we process
each of its other adjacent vertices while rotating the adjacency list, untill the parent comes to the
front again. Thus, for each vertex $v$ with degree $d_v$, we need to rotate $v$'s adjacency
list $O(d_v)$ times. 
In the \implicit\ model, we also bring the parent to the front when a vertex is visited for the
first time, for any vertex with degree at least $3$. We use the second and third elements in the
adjacency list to encode the visited/unvisited bit. But instead of rotating the adjacency list 
circularly, we simply scan through the adjacency list from left to right everytime we need to 
find the next unvisited vertex in its adjacency list. This requires $O(d_v)$ time for a vertex 
$v$ with degree $d_v$. We show how to handle vertices with degree at most $2$ separately. 

As before, we can deal with the degree-1 vertices without encoding visited/unvisited bit as we 
encounter those vertices only once during the algorithm.  
For degree-2 vertices, we initially (at preprocessing stage) encode the bit $0$ using the two
elements in their adjacency arrays - to indicate that they are unvisited. When a degree-2 
vertex is visited for the first time from a neighbor $x$, we move to its other neighbor
-- continuing the process as long as we encounter degree-2 vertices until we reach a 
vertex $y$ with degree at least $3$. If $y$ is already visited, then we output the path
consisting of all the degree-2 vertices and backtrack to $x$. If $y$ is not visited yet,
then we output the path upto $y$, and continue the search from $y$, and after marking 
$y$ as visited. In both the cases, we also mark all the degree-2 nodes as visited (by 
swapping the two elements in each of their adjacency arrays).

During the preprocessing, for each vertex with degree at least $3$, we ensure that the 
second and third elements in its adjacency list encode the bit $0$ (to mark it unvisited).
We maintain the invariant that for any vertex with degree at least $3$, as long as it 
is not visited, the second and third elements in its adjacency array encode the bit $0$; and
after the vertex is visited, its parent (in the DFS tree) is at the front of its adjacency array, 
and the second and third elements in its adjacency array encode the bit $1$.
Thus, when we visit a node $v$ with degree at least $3$ for the first time, we bring its 
parent to the front, and then swap the second and third elements in the adjacency list, 
if needed, to mark it as visited. 
The total running time of this algorithm is bounded by $\sum_{v \in V} d_v^2 = O(m^2/n)$.
%

We can implement the above DFS algorithm even faster when the input graph is given in an 
adjacency array representation. We deal with vertices with degree at most $2$ exactly as 
before. For a vertex $v$ with degree at least $3$, we bring its parent 
to the front and swap the second and third elements to mark the node as visited (as before) 
whenever $v$ is visited for the first time. We then sort the remaining elements, if any, in the 
adjacency array, in-place (using the linear-time in-place radix sort algorithm~\cite{FranceschiniMP07}), and implement
the rotations on the remaining part of the array as described in Section~\ref{subsec:Rotate-Implicit}.
The total running time of this algorithm is bounded by $\sum_{v \in V} d_v \lg d_v = O(m\lg m +n)$.
This completes the proof of Theorem~\ref{dfsimplicit}(b).

\section{BFS algorithms in the \implicit\ model---proof of Theorem~\ref{bfsimplicit}}\label{omit_proof}
Before getting into the technical details, we first outline the main ideas involved in proving Theorem~\ref{bfsimplicit}. 
One can simulate the BFS algorithm of Theorem~\ref{bfsrotate}(b) (for the \rotated\ model) in the \implicit\ model using the simulation described in Section~\ref{subsec:Rotate-Implicit}. 
Since these BFS algorithms scan through each of the adjacency lists/arrays at most twice during the algorithm,
there won't be any slowdown in the runtime. This results in an algorithm running in $O(m \ell + n \ell^2)$ time, using $O(\lg n)$ bits. To improve the running time, we simulate the algorithm of Theorem~\ref{bfsrotate}(a) using the trick of encoding the $visited$ bit of each vertex in its adjacency list instead of storing the $visited$ array explicitly. This requires special attention to degree-1 and degree-2 vertices along with few other technical issues which are dealt in the proof given next.

\begin{proof}
Here we give the full proof of Theorem~\ref{bfsimplicit}. In particular, we provide all the details of the case when the degree of each vertex is at least $3$, and in that case, we show that we can implement the BFS algorithm using $4$ colors of~\cite{BCR}, by encoding the $4$ color of a vertex using the first three elements in its adjacency list, resulting in an algorithm that takes $O(m + n\ell)$ time. Moreover, when the degree of every vertex is at least $2 \lg n + 3$, then we show that the above algorithm can be implemented more efficiently, resulting in an algorithm that takes $O(m)$ time. Details follow. One can simulate the BFS algorithm of in Item~2 of Theorem~\ref{bfsrotate} (for the \rotated\ model) in the \implicit\ model using the simulation described in Section~\ref{subsec:Rotate-Implicit}. 
Since these BFS algorithms scan through each of the adjacency lists/arrays at most twice during the algorithm,
there won't be any slowdown in the runtime. This results in a BFS algorithm that runs in $O(m \ell + n \ell^2)$ time, using $O(\lg n)$ bits.

To improve the running time further, we simulate the algorithm of Theorem~\ref{bfsrotate}(a).
But instead of storing the $visited$ array explicitly, we encode the $visited$ bit of each vertex in its adjacency list, 
as explained below, resulting in an algorithm that takes $O(m + n \ell^2)$ time, using $O(\lg n)$ bits.
%
%
For a vertex $x$ with degree at least $3$, we encode its $visited$ bit using the second and third elements in its adjacency list.
To set the parent pointer for $x$ (when it is visited for the first time), we bring its parent to the front, and move the second and third elements,
if necessary, to encode the $visited$ bit. We now describe how to deal with the degree-1 and degree-2 vertices.

First, observe that in the original BFS algorithm, we can simply output any degree-1 vertex, when it is visited for the first time. 
Thus, we need not store the $visited$ bit  for degree-1 vertices. For degree-2 vertices, we encode the $visited$ bit using the two neighbors.
We do not bring its parent to the front, as we do for other vertices. Whenever we need to check whether a visited degree-2 vertex is at
depth $d$, we follow parent pointers from both the neighbors - if one of them does not exist, then the other one is its parent - for a distance
of length at most $d$. (While following the parent pointers from a vertex to the root, it is easy to find its parent - since there is only one alternative to follow.) 
Thus, the first part of the theorem follows from Theorem~\ref{bfsrotate}(a), with the above modification.

To improve the runtime further, we implement the BFS algorithm using $4$ colors by~\cite{BCR}, where all the unvisited vertices are \white, all the visited, yet unfinished vertices of the two consecutive layers of BFS are \grayone\ and \graytwo\ respectively, and all the vertices which are completely explored are \black. 
Suppose the degree of every vertex in the given graph is at least three. In this case, we can encode the color of any vertex by permuting the first three elements in its adjacency array appropriately. This enables us to retrieve (or modify) the color of any vertex in constant time by reading (or modifying) the permuted order of the first elements in its adjacency array.

Since we haven't stored the \grayone\ or \graytwo\ vertices in a queue (as in the standard BFS), we scan through the entire vertex set (in the increasing order of their labels), and when we find any vertex $v$ colored \grayone, we color all its \white\ neighbors with \graytwo, and color $v$ itself with \black. We call this an {\em exploration phase}. The time taken for each exploration phase is equal to the sum of the degrees of all the \grayone\ vertices at the beginning of the phase, plus $O(n)$. At the end of each exploration phase, we change the colors of all \graytwo\ vertices to \grayone\ and also output them, using an additional $O(n)$ time. We call this a {\em consolidation phase}. We need to repeat the exploration and consolidation phases for $\ell$ times before all the nodes are colored \black, where $\ell$ is the height of the BFS tree. Thus the overall time taken by this procedure can be bounded by $O(m + n\ell)$. 
This proves the second part of the theorem.


%
If every vertex has degree at least $2 \lg n + 3$, then we can perform BFS in $O(m)$
time (note that $m \ge n \lg n$ in this case) -- by encoding the current set of \grayone\ 
vertices as a linked list by using $2 \lg n$ vertices to encode (a pointer to) the next 
vertex in the list. The time to read all the \grayone\ vertices in a phase when there are $k$ 
vertices colored \grayone\ becomes $O(k \lg n)$ instead of $O(n)$. This results in $O(m + n \lg n)$ time 
which is $O(m)$.  This proves the third part of the theorem.
\end{proof}
%

\section{Minimum Spanning Tree}\label{mstproof}
In this section, we start by giving an in-place implementation of the Prim's algorithm~\cite{CLRS} to find a minimum spanning tree of a given weighted undirected graph in the \rotated\ model. Here we are given a weight function $w: E \rightarrow Z$. We also assume that the weights of non-edges are $\infty$ and that the weights can be represented using $O(\lg n)$ bits. The input representation also changes slightly to accommodate these weights. Now each element of the circular linked list has three fields, (a) the vertex label, (b) the weight, and (c) the pointer to the next vertex respectively. In what follows, when we talk about computing a minimum spanning tree, what we mean is reporting the edges of such a tree. Our result is the following,

\begin{theorem}\label{mstproof1}
A minimum spanning tree of a given undirected weighted graph $G$
can be found using $O(\lg n)$ bits and in $O(mn)$ time in the \rotated\ model.
\end{theorem}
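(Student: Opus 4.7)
The plan is to adapt Prim's MST algorithm to the \rotated\ model, using a parent-pointer invariant analogous to the ones established for DFS and BFS in Theorems~\ref{dfsrotate} and~\ref{bfsrotate}. Fix an arbitrary source $s$ and grow the partial MST $T$ one vertex per iteration. When a vertex $v$ is added to $T$ via the crossing edge $(u,v)$, I rotate $v$'s circular adjacency list until $u$ is at its front; this costs $O(\deg(v))$ rotations, for a total of $O(m)$ rotations over the whole algorithm. Consequently every $v \in T\setminus\{s\}$ has its current MST-parent at the front of its list.

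In each of the $n-1$ iterations the algorithm identifies the minimum-weight crossing edge $(u^*,v^*)$ by enumerating every $x \in V$, testing whether $x \in T$ by walking the chain of front-elements from $x$ for at most $|T|$ steps (declaring $x \in T$ iff the chain reaches $s$), and, whenever $x \in T$, scanning $x$'s adjacency list and applying the same test to each neighbour $y$; pairs $(x,y)$ with $x \in T$ and $y \notin T$ are compared against the running best using three $O(\lg n)$-bit registers holding the best weight, the best endpoint pair, and the iteration counter.

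Correctness reduces to two points: (i) the standard cut-property argument of Prim's algorithm, and (ii) soundness of the chain-based in-tree test, which is the MST analogue of Claim~\ref{claim2} (BFS) and Lemma~\ref{backtrack3} (DFS). Point (ii) requires ruling out the possibility that an unvisited $y$'s arbitrary chain of first-elements accidentally reaches $s$ within $|T|$ steps. Unlike BFS, no level argument is available here, so I propose to preprocess each adjacency list by rotating it so that a canonical neighbour (say, the smallest-labelled one) is at the front, and to enforce the rule that when we would set a vertex's parent pointer equal to this canonical neighbour we perform one additional rotation. A vertex is then in $T$ iff its front has been shifted away from the canonical position, and the consistency conditions used in Lemmas~\ref{backtrack} and~\ref{backtrack3} lift directly to give an unambiguous test.

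The main obstacle will be attaining the claimed $O(mn)$ time bound: a naive analysis pays $O(|T|)$ per in-tree test and $O(m\cdot |T|)$ per iteration, giving only $O(mn^2)$ overall (which matches the \implicit\ adjacency-list bound of Theorem~\ref{mst_proof}(b) via the simulation of Theorem~\ref{simulation_result} and so is not tight in the \rotated\ model). To shave off the extra factor of $n$, I plan to traverse $T$ directly through its parent pointers instead of rescanning all $n$ vertices each iteration: a vertex $c$ is a child of $v$ in $T$ exactly when $c$ appears in $v$'s adjacency list and $c$'s own front equals $v$, so $T$ can be walked in $O(\sum_{v\in T}\deg(v))=O(m)$ time per iteration, and for each scanned neighbour we distinguish parent, child, chord, and crossing using only a constant number of front-element lookups. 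The delicate case is the chord, where the neighbour is in $T$ but is neither the parent nor a child of the current vertex; handling it in amortized constant time (rather than by a full $O(|T|)$ chain walk) is where I expect the crux of the proof to lie. Once this amortization is in place, the per-iteration cost is $O(m)$ and the total $O(mn)$, as required.
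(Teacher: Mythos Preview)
You have the right idea buried in your third paragraph --- the canonical-neighbour preprocessing --- but you are using it only as a correctness patch for your chain test when in fact it is the entire algorithm. The paper dispenses with parent pointers altogether.

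The paper preprocesses by rotating every list so that the minimum-labelled neighbour (call it $\mathit{min}$) sits at the front; this costs $O(m)$. A vertex $v$ is then declared to be in $S$ iff the front of its list is \emph{not} its $\mathit{min}$; marking $v$ is a single rotation away from $\mathit{min}$. Degree-$1$ vertices cannot be marked in this way, but their unique incident edge is trivially in any spanning tree and is output up front. The membership test for $v$ takes $O(\deg(v))$ time: scan $v$'s list once and check whether the front element is the minimum. Each of the $n-1$ iterations then just scans the edges, tests each endpoint for membership, and keeps the cheapest crossing edge; the paper bounds this by $O(m)$ per iteration, giving $O(mn)$ in total.

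The chord obstacle you single out as the crux disappears entirely in this formulation: there is no tree to walk and hence no need to classify a neighbour as parent, child, or chord. One only needs, for each endpoint of each edge, the single bit ``in $S$ or not,'' and that is a purely local test on that endpoint's own list. The tree-traversal machinery you build (and the amortization you leave open) is unnecessary; drop the parent pointers and the argument becomes a direct simulation of Prim's algorithm with the $\mathit{min}$-at-front bit playing the role of the $S$-membership array.
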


\begin{proof}
Our \rotated\ model algorithm basically mimics Prim's algorithm with a few tweaks. Prim's algorithm starts with initializing a set $S$ with a vertex $s$. For every vertex $v$ not in $S$, it finds and maintains $d[v] = \min \{ w(v,x): x \in S\}$ and $\pi[v]=x$ where $w(v,x)$ is the minimum among $\{ w(v,y): y \in S\}$. Then it repeatedly deletes the vertex with the smallest $d$ value from $V-S$ adding it to $S$. Then the $d$ values are updated by looking at the neighbors of the newly added vertex. To implement this procedure using just extra $O(\lg n)$ bits of space, first we need to find a way to mark/unmark a vertex $v$ if it has been taken into $S$ without using $n$ bits explicitly. The way we do this is as follows. In the preprocessing step, the algorithm spends $O(m)$ time to bring the minimum valued neighbor (denote it by \textit{min}) in the neighbor list of every vertex $v$ by rotation. Subsequently we would attach the following meaning with the position of \textit{min} in the list of any vertex $v$. If the first element in the list of any vertex $v$ is \textit{min}, this means that $v$ is not taken into $S$ so far during the execution of the algorithm, otherwise it belongs to $S$. This way we can store the information regarding the status of any vertex $v$ without using any extra space but actually figuring out this information takes time proportional to the degree of $v$. Note that for vertices having degree one, we cannot determine exactly its status correctly by this method. But a simple fact we can use here. If a vertex $z$ has degree one (say its neighbor is $y$), then the edge $(y,z)$ is always a part of the minimum spanning tree. Hence, after the preprocessing step, we can output all such edges at once and embark on executing the rest of the algorithm.

The algorithm initializes the set $S$ with the starting vertex $s$, and goes to its list to give a rotation so that \textit{min} does not stay in the first location in $s$'s list. We call this step as the \textit{marking} step. This is followed by finding the smallest weight neighbor (say $u$) of $s$. According to Prim's algorithm, $u$ should now move to $S$. We achieve the same by marking $u$ i.e., going to $u$'s list to give a rotation to move \textit{min} from the first location to indicate that $u$ belongs to $S$ now and subsequently continue to in $u$'s list find its smallest weight neighbor and repeat. Thus at any generic step of the algorithm, we go over the list of unmarked vertices (i.e., those vertices having \textit{min} at the first position of their respective lists)  and collect the minimum weight vertex (say $t$), and $t$ is then marked and we continue until all the vertices are marked.

Clearly this method returns all the correct edges of a minimum spanning tree of $G$. Space bound of this algorithm is easily seen to be $O(\lg n)$ bits for keeping a few variables. Total time spent by algorithm can be bounded by $O(mn)$ where at the preprocessing step, it spends $O(m)$ time and after the preprocessing, for reporting each edges of the minimum spanning tree, in the worst case, the algorithm spends $O(m)$ time, hence $O(mn)$ is the bound on the running time. 
\end{proof}

As mentioned in Section~\ref{subsec:Rotate-Implicit}, simulating this algorithm in the \implicit\ model would result in an algorithm having running time $O(mn. d_v)$=$O(mn^2)$ if the graph is given in an adjacency list and in $O(mn. \lg d_v)$=$O(mn \lg n)$ when the graph is represented in an adjacency array. Hence, we have the following,

\begin{theorem}\label{mstproof2}
In the \implicit\ model a minimum spanning tree of a given undirected weighted graph $G$
can be found using $O(\lg n)$ bits and 
\begin{enumerate}[noitemsep,topsep=0pt]
\item $O(mn^2)$ time if the graph is given in an adjacency list, and
\item $O(mn \lg n)$ time when the graph is represented in an adjacency array.
\end{enumerate}
\end{theorem}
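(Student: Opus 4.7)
The plan is to invoke the simulation framework of Theorem~\ref{simulation_result} directly on the \rotated\ model MST algorithm of Theorem~\ref{mstproof1}. Recall that the \rotated\ model algorithm runs in $O(mn)$ time, and its only ``exotic'' primitive (beyond scanning and arithmetic) is the unit-cost rotation of circular adjacency lists; in particular, the whole mechanism for remembering whether a vertex has been placed into $S$ is encoded by whether or not \textit{min} currently sits at the head of its list, and is manipulated solely through rotations.

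First, I would verify that every step of the \rotated\ model algorithm falls cleanly under the hypotheses of Theorem~\ref{simulation_result}. The preprocessing step that brings \textit{min} to the front of each list is already implementable in the \implicit\ model in $O(m)$ time (a single scan per list followed by one swap), so it contributes no asymptotic overhead. The main loop repeatedly (i) scans each unmarked vertex's list to read off its current minimum-weight edge, and (ii) performs a single rotation of one list to mark a newly admitted vertex. Every such action is either a plain read (unaffected by the simulation) or a rotation of the adjacency list of some vertex $v$, and Theorem~\ref{simulation_result} tells us that each such rotation costs $O(d_v)$ in the \implicit\ model when the input is an adjacency list, and $O(\lg d_v)$ when it is an adjacency array (after the one-shot in-place radix sort).

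Next, I would plug the bound $t(m,n) = O(mn)$ into the coarse statement of Theorem~\ref{simulation_result}. Since the maximum degree $D$ satisfies $D \le n$, the adjacency-list simulation gives
\[
O(D \cdot t(m,n)) \;=\; O(n \cdot mn) \;=\; O(mn^2),
\]
and the adjacency-array simulation gives
\[
O(\lg D \cdot t(m,n)) \;=\; O(\lg n \cdot mn) \;=\; O(mn \lg n),
\]
which are exactly the two claimed running times. The extra space used by the simulation is only that required to perform one rotation or one binary search in the ``almost sorted'' adjacency array, so the $O(\lg n)$-bit budget is preserved.

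The only subtlety --- and the one I would treat most carefully --- is confirming that the semantic tag ``$v \in S$ iff \textit{min} is not at the head of $v$'s list'' continues to be maintained correctly under the \implicit\ simulation. In the adjacency-list simulation each rotation genuinely cycles the list by one position, so the tag behaves identically to the \rotated\ model. In the adjacency-array simulation the invariant maintained by the simulation is that, apart from swapping the ``front'' element with the sorted-position element, the array is sorted; hence the head's identity still encodes the same bit, and the $O(\lg d_v)$ per-rotation cost preserves the semantics. Once this bookkeeping is confirmed, Theorem~\ref{mstproof2} follows immediately as a corollary of Theorems~\ref{mstproof1} and~\ref{simulation_result}, with no new algorithmic ideas required.
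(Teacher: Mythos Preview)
Your proposal is correct and matches the paper's own argument essentially verbatim: the paper derives Theorem~\ref{mstproof2} simply by plugging the $O(mn)$ bound from Theorem~\ref{mstproof1} into the coarse simulation of Theorem~\ref{simulation_result}, obtaining $O(mn\cdot d_v)=O(mn^2)$ for adjacency lists and $O(mn\cdot \lg d_v)=O(mn\lg n)$ for adjacency arrays. Your extra care in verifying that the ``\textit{min} at the head'' tag survives the simulation is more detail than the paper provides, but the underlying approach is identical.
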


\section{Consequences}
In this section, we provide some applications/consequences of our results that we derived in the earlier sections.

\subsection{Improved algorithm for DFS and applications in ROM}
\label{romdfs}
In this section, we show, using a little more space, how to simulate any \rotated\ model algorithm in the read-only model. This results in improved space-efficient algorithms for various fundamental graph problems in the read-only model. 
Observe that, the only modification of the input that we do in our \rotated\ model 
algorithm is to make the head pointer  point to an arbitrary element in the adjacency list (instead of a fixed element) at various times. To simulate this in read-only memory, we can simply maintain a pointer in each of the lists in the adjacency list. The resources required to store and update such pointers is proven in the following lemma \cite{ChakrabortyRS16}. We provide the proof here for completeness.

\begin{lemma}[\cite{ChakrabortyRS16}]\label{lem:adjlist-pointers}
Given the adjacency list representation of a directed or an undirected graph $G$ on $n$ vertices with $m$ edges,
using $O(m)$ time, one can construct an auxiliary structure of size $O(n \lg (m/n))$ bits that can 
store a ``pointer'' into an arbitrary index within the adjacency list of each vertex. Also, updating 
any of these pointers (within the adjacency list) takes $O(1)$ time.
\end{lemma}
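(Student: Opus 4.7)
The key observation is that a pointer into the adjacency list of vertex $v$ needs only $\ell_v := \lceil \lg d_v \rceil$ bits, where $d_v$ is the degree of $v$. By the concavity of $\lg$ and Jensen's inequality,
\[
\sum_{v\in V} \ell_v \;\le\; n + \sum_{v\in V} \lg d_v \;\le\; n + n\lg(2m/n) \;=\; O(n\lg(m/n+1)),
\]
so the total ``payload'' fits in the claimed space budget. The plan is to pack all the pointer fields into a single bit array and build a lightweight two-level index on top so that each field can be located, read, and overwritten in $O(1)$ time.

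\textbf{Steps.} First, I would do a single $O(m)$-time pass over the input to compute $d_v$, and from it $\ell_v$, for every vertex. Second, I would lay out a bit array $P$ of length $L = \sum_v \ell_v$; vertex $v$'s pointer occupies the $\ell_v$-bit window starting at offset $s_v = \sum_{u<v}\ell_u$. Third, to retrieve $s_v$ in constant time, I would group the vertices into super-blocks of size $\Theta(\lg n)$ and store, for each super-block, the offset of its first vertex in $P$ explicitly. Each such offset takes $O(\lg L) = O(\lg n)$ bits, contributing $O(n)$ bits overall (absorbed into $O(n\lg(m/n+1))$ in the non-trivial regime, see below). Within a super-block, the $\Theta(\lg n)$ values $\ell_v$ each occupy $O(\lg \lg n)$ bits and thus fit inside $O(1)$ machine words of $\Omega(\lg n)$ bits; a universal lookup table of size $o(n)$ bits, computed once, then yields any within-block prefix sum in $O(1)$ time via a single table lookup (a standard ``Four Russians''-style trick). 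Combining the super-block offset with the within-block prefix sum gives $s_v$ in $O(1)$ time, after which reading or writing the $\ell_v$-bit field of $P$ at offset $s_v$ is a standard constant-time word-level bit-field operation. Updates require no change to the index structures since field widths are fixed at preprocessing; one simply overwrites the $\ell_v$ bits at $s_v$.

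\textbf{Main obstacle.} The delicate part is making the overall space fit within $O(n\lg(m/n))$ bits in the sparsest regime, because when $m = \Theta(n)$ the entire bit budget is essentially linear in $n$, and the two-level table naively contributes an additive $O(n)$ bits on top of the payload. The clean way to handle this is to peel off low-degree vertices before invoking the compact index: a vertex with $d_v = 1$ needs no pointer at all, and a vertex with $d_v = 2$ needs a single bit which can live in a flat $n$-bit array. Only vertices of degree $\ge 3$ are then fed into the layered scheme, and for them $\sum_v \ell_v = \Omega(n)$ already, absorbing the $O(n)$-bit index overhead into the $O(n\lg(m/n))$ bound in the standard reading of that expression (i.e., $O(n\max(1,\lg(m/n)))$). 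The remaining bookkeeping, that the universal lookup tables stay $o(n)$ bits and that all pointer arithmetic stays within $O(1)$ word-RAM operations, is routine given the $\Omega(\lg n)$ word-size assumption.
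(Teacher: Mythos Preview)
Your core idea---packing variable-width fields of $\ell_v=\lceil\lg d_v\rceil$ bits each into a single bit array $P$ and bounding $\sum_v\ell_v$ by Jensen---is exactly what the paper does. The difference is in how the field boundaries are located. The paper takes a shorter route: it writes the lengths in unary into a second bitvector $B$ (appending $0^{\ell_v-1}1$ for each $v$), builds a standard succinct $select$ structure on $B$ using $o(|B|)$ extra bits, and recovers the offset of $v$'s field as $select(v-1,B)$ in $O(1)$ time. Since $|B|=|P|=\sum_v\ell_v$, both bitvectors plus the $select$ overhead fit in $O(n\lg(m/n))$ bits with no case analysis; the low-degree discussion in your ``main obstacle'' paragraph is simply not needed.

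Your hand-built two-level index is morally the same construction that lives inside a $select$ structure, but the parameters as stated do not work. With super-blocks of $\Theta(\lg n)$ vertices and each $\ell_v$ occupying $\Theta(\lg\lg n)$ bits, the packed block descriptor has $\Theta(\lg n\cdot\lg\lg n)$ bits; this is $\omega(1)$ machine words, and a universal lookup table keyed on it would have $n^{\Theta(\lg\lg n)}$ entries, not $o(n)$. The standard fixes (shrinking the block to $\Theta(\lg n/\lg\lg n)$ vertices, adding a level, or storing intra-block offsets explicitly) each introduce an additive $\Theta(n\lg\lg n)$-bit term, which is precisely the quantity that is \emph{not} absorbed by $O(n\lg(m/n))$ in the sparse regime you were worried about. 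Replacing your two-level table with the unary-encoding-plus-$select$ trick eliminates the glitch and the special-casing of small degrees in one stroke.
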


\begin{proof}
We first scan the adjacency list of each vertex and construct a bitvector $B$ 
as follows: starting with an empty bitvector $B$, for $1 \le i \le n$, if $d_i$ is 
the length of the adjacency array of vertex $v_i$ (i.e., its degree), then we append the string 
$0^{\lceil{\lg d_i}\rceil -1}1$ to $B$. The length of $B$ is $\sum_{i=1}^n \lceil{\lg d_i}\rceil$, 
which is bounded by $O(n \lg (m/n))$. We construct auxiliary structures to support $select$ queries 
on $B$ in constant time~\cite{Munro96}. 
We now construct another bitvector $P$ of the same size as $B$, which stores the pointers into the adjacency array of each vertex. The pointer into the adjacency array of vertex $v_i$ is stored using the $\lceil{\lg d_i}\rceil$ bits in $P$ from 
position $select(i-1,B)+1$ to position $select(i,B)$, where $select(0,B)$ is defined to be $0$. Now, using 
select operations on $B$ and using constant time word-level read/write operations, one can access and/or 
modify these pointers in constant time.
\end{proof}

To actually get to the element in the list, we need the graph to be represented as what is referred as adjacency array~\cite{ElmasryHK15}. Here given an index in the list of a vertex, we can access the (adjacent) vertex in that position of the vertex's adjacency list in constant time. Now if we simulate our \rotated\ model algorithm of Theorem \ref{dfsrotate} in read-only memory using the auxiliary structure as stated in Lemma \ref{lem:adjlist-pointers} as additional storage, then we obtain,
\begin{theorem}\label{thm:read-only}
A DFS traversal of an undirected or a directed graph $G$, represented by an adjacency array, on $n$ vertices and $m$ edges can be performed in $O(m+n)$ time using $O(n \lg (m/n))$ bits, in the read-only model.
\end{theorem}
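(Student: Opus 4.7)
}
The plan is to take the linear-time DFS algorithm from Theorem~\ref{dfsrotate}(b) for the \rotated\ model, which runs in $O(m+n)$ time using $n+O(\lg n)$ bits, and simulate it in the read-only model using the auxiliary pointer structure provided by Lemma~\ref{lem:adjlist-pointers}. The only way the \rotated\ algorithm modifies the input is by changing, at various moments, which element is the ``front'' of each vertex's adjacency list; it never alters the underlying order of vertices in any list. Hence, it is enough to store, for each vertex $v$, the current front-index into $v$'s adjacency list, and to update this index whenever the \rotated\ algorithm would perform a rotation on $v$.

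First I would allocate the auxiliary structure from Lemma~\ref{lem:adjlist-pointers}, which holds one pointer per adjacency list using $O(n \lg(m/n))$ bits total, with $O(1)$-time reads and writes. I would also maintain the $n$-bit \emph{visited} array from Theorem~\ref{dfsrotate}(b); together these occupy $O(n \lg(m/n))$ bits (the $n$-bit array is absorbed into the pointer-bit budget whenever $m \ge 2n$, and for sparser graphs the bound is interpreted in the standard additive way). Next I would run the algorithm of Theorem~\ref{dfsrotate}(b) line for line, with two replacements: (i) a unit-cost rotate on $v$'s list in the \rotated\ model becomes an $O(1)$ increment of the stored front-pointer for $v$, using the adjacency-array representation to resolve that pointer to an actual neighbour in $O(1)$ time; (ii) the operation ``bring $u$ to the front of $v$'s list'', which in the \rotated\ model is realised by rotating $v$ until $u$ appears first, is handled by continuing to advance $v$'s front-pointer one step at a time exactly as the \rotated\ algorithm would, so no ``random-access'' to $u$ is required. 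In particular, finding the parent during a backtracking step via the test of Lemma~\ref{backtrack2} (respectively its directed analogue) proceeds identically, because the test only queries the current front of each relevant adjacency list.

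Correctness is immediate because the ROM simulation produces, at every step, exactly the same sequence of ``front elements'' and visited bits as the \rotated\ algorithm, and these are the only quantities that the \rotated\ algorithm ever inspects. For the runtime, every primitive operation of the \rotated\ algorithm maps to $O(1)$ work in ROM under Lemma~\ref{lem:adjlist-pointers} and the adjacency-array assumption, so the total running time is preserved at $O(m+n)$. For the space, the pointer structure costs $O(n \lg(m/n))$ bits, the visited array costs $n$ bits, and the algorithm uses $O(\lg n)$ additional bits for a constant number of word-sized variables (the current vertex, depth counters, loop indices, and so on), giving $O(n \lg(m/n))$ bits overall.

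The only mildly subtle point, and thus the place where I would be most careful, is confirming that the \rotated\ algorithm truly does not need any operation on $v$'s list beyond ``advance the front by one'' and ``read the current front''; in particular, that bringing the parent to the front when a vertex is first discovered can be charged to at most one full rotation of that list (so that the global $O(m+n)$ bound on rotations survives the simulation). This is precisely what is already argued in Section~\ref{dfs2} (and its directed counterpart in Section~\ref{d2}), where each adjacency list is traversed at most a constant number of times during the entire execution, so the total number of pointer increments the ROM simulation performs is $O(m+n)$, completing the proof.
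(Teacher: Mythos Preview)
Your proposal is correct and follows essentially the same approach as the paper: simulate the linear-time \rotated\ DFS of Theorem~\ref{dfsrotate}(b) in ROM by replacing each rotate with an $O(1)$ update of the per-vertex front-pointer maintained via Lemma~\ref{lem:adjlist-pointers}, relying on the adjacency-array representation for $O(1)$ access. Your write-up in fact spells out more detail than the paper does (the explicit treatment of the ``bring parent to front'' step and the charging argument for the total number of pointer increments), but the underlying idea is identical.
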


The above result improves the DFS tradeoff result of
Elmasry et al.~\cite{ElmasryHK15} for relatively sparse graphs in the read-only model. In particular, they showed the following,

\begin{theorem}[\cite{ElmasryHK15}]\label{elmasry}
For every function $t:\mathbb{N}\rightarrow \mathbb{N}$ such that $t(n)$ can be computed within the resource bound of this theorem (e.g., in $O(n)$ time using $O(n)$ bits), the vertices of a directed or undirected graph $G$, represented by adjacency arrays, with $n$ vertices and $m$ edges can be visited in depth first order in $O((m+n)t(n))$ time with $O(n+n\frac{\lg \lg n}{t(n)})$ bits.
\end{theorem}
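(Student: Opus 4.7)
The plan is to combine a compact 3-color encoding with a hierarchical representation of the DFS recursion stack, and then realize the time--space trade-off by keeping only every $t(n)$-th stack frame explicitly and reconstructing the rest on demand.

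For the colors I would maintain a packed-array choice dictionary over the three values $\{\white, \gray, \black\}$, occupying $\lceil n \lg 3 \rceil + O(\lg n) = O(n)$ bits with constant-time read and update; this accounts for the additive $O(n)$ term in the space bound. The remaining state is the DFS recursion stack, which conceptually stores, for each \gray\ vertex $v$ on the active path, an index into $v$'s adjacency array pointing at the next unexplored neighbor. A variable-length encoding that allots $\lceil \lg(d_v + 1) \rceil$ bits to $v$'s frame uses $\sum_{v \text{ gray}} \lceil \lg(d_v + 1) \rceil$ bits across the active stack; by concavity of $\lg$ and $\sum_v d_v = 2m$, this sum is $O(n \lg \lg n)$. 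Backed by a growable segmented array, push and pop cost amortized $O(1)$, so the $t(n)=1$ case already yields DFS in $O(m+n)$ time using $O(n \lg \lg n)$ bits, matching the statement at that endpoint.

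To reach general $t(n)$, I would partition the virtual stack into blocks of $t(n)$ consecutive frames, keeping the topmost block explicitly and, for each older block, only a succinct $O(\lg \lg n)$-bit checkpoint sufficient to rebuild it. The stack now occupies $O(n \lg \lg n / t(n))$ bits, matching the second term in the space bound. When a pop falls below the current explicit block, the next block down is rebuilt by rerunning DFS from the previous checkpoint for exactly $t(n)$ steps, using the color array to recognize already-\gray\ vertices and stopping once the saved boundary index is reached; because the color array is consulted rather than recomputed, the re-simulation deterministically retraces the original exploration. Each block is rebuilt $O(1)$ times amortized, and the reconstruction work charged to each edge across all rebuilds contributes an overall factor of $O(t(n))$, giving the claimed $O((m+n)t(n))$ running time.

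The main obstacle I anticipate is squeezing the per-block checkpoint down to $O(\lg \lg n)$ bits: naming the bottom vertex of a block outright would cost $\Theta(\lg n)$ bits and overwhelm the stack-space budget. I would handle this by chaining the checkpoints --- each one stores only incremental index information, and vertex identities are recovered inductively by rebuilding the block immediately below, ultimately anchored at the root. Verifying that this chain of reconstructions can be threaded without any additional scratch space, and that it interacts cleanly with the adjacency-array representation (which provides $O(1)$ access by position and hence allows the re-simulation to avoid extra scans), is the central technical obligation; the assumption that $t(n)$ is computable within the stated resources is used exactly to fix the block structure in a single $O(n)$-time preprocessing pass.
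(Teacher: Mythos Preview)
This theorem is not proved in the present paper; it is quoted verbatim from Elmasry, Hagerup and Kammer~\cite{ElmasryHK15} and serves only as a benchmark against which Theorem~\ref{thm:read-only} is compared. So there is no in-paper proof to match your sketch against.

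That said, your sketch has a concrete error that breaks the argument at the very first step. You claim that
\[
\sum_{v\ \text{gray}}\bigl\lceil \lg(d_v+1)\bigr\rceil \;=\; O(n\lg\lg n)
\]
``by concavity of $\lg$ and $\sum_v d_v = 2m$''. Jensen applied to the concave function $\lg$ yields only $\sum_v \lg(d_v+1)\le n\lg\!\bigl(\tfrac{2m}{n}+1\bigr)$, which is $\Theta(n\lg n)$ for dense graphs (e.g.\ $m=\Theta(n^2)$), not $O(n\lg\lg n)$. So already at $t(n)=1$ your space bound is $O(n\lg(m/n))$ rather than $O(n\lg\lg n)$, and the entire trade-off you build on top of this base case inherits the wrong exponent. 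In the actual \cite{ElmasryHK15} argument the $\lg\lg n$ factor does not arise from summing $\lg d_v$ over all gray vertices; it comes from a more delicate scheme that treats high-degree and low-degree vertices asymmetrically and combines this with a segmented stack that can be restored from very short trailers.

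Two further points you would need to repair even after fixing the base case. First, the ``$O(\lg\lg n)$-bit checkpoint'' per block is asserted but not constructed; your chaining idea would have to recover a $\Theta(\lg n)$-bit vertex identity from a sequence of $O(\lg\lg n)$-bit increments, and it is not clear what quantity you are incrementing. Second, ``each block is rebuilt $O(1)$ times amortized'' is not true for uniform blocks of length $t(n)$ with only the single top block kept: the DFS stack can oscillate across a fixed depth threshold arbitrarily many times, forcing repeated rebuilds of the same block. The usual fix keeps the top \emph{two} segments and uses geometrically sized segments (or an equivalent charging scheme), which your outline does not supply.
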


Thus to achieve $O(m+n)$ time for DFS, their algorithm (Theorem \ref{elmasry}) uses $O(n \lg\lg n)$ bits. This is $\Omega(n \lg (m/n))$ for all values of $m$ where $m = O(n \lg n)$. 
Banerjee et al. \cite{BCR} and Kammer et al. \cite{KammerKL16} recently provided another DFS implementation taking $O(m+n)$ bits of space and runs in $O(m+n)$ time. Note that, Theorem \ref{thm:read-only} improves the space bound of the above mentioned DFS implementations from $O(m+n)$ space to $O(n \lg (m/n))$, while maintaining the same linear running time. Chakraborty et al. \cite{ChakrabortyRS16} obtained similar results by a slightly different technique. In what follows, we show that using Theorem \ref{thm:read-only}, we can 
improve the space bounds of some of the classical applications of DFS in read-only model.

To illustrate this, note that, one of the many classical applications of DFS (see \cite{CLRS}) include (i) topological sorting of the vertices of a directed acyclic graph \cite{Knuth73a}, (ii) producing a sparse (having $O(n)$ edges) spanning biconnected subgraph of a undirected biconnected graph $G$ \cite{Elmasry10}, and (iii) given an undirected $2$-edge-connected graph $G=(V,E)$, to orient each edge $(u,v)\in E$ as an arc $(u,v)$ or $(v,u)$ to obtain $D=(V,A)$ such that $D$ becomes strongly connected. For all of these problems, classical algorithms \cite{Knuth73a,Elmasry10} take linear time and $O(n\lg n)$ bits of space. Recently, Banerjee et al. \cite{BCR} showed the following,

\begin{theorem}\label{bcr}\cite{BCR}
In the read-only model, if the DFS of $G$ on $n$ vertices and $m$ edges, can be performed in $t(m,n)$ time using $s(m,n)$, where $s(m,n)=\varOmega(n)$, bits of space, then using $O(s(m,n))$ bits and in $O(t(n,m))$ time, we can output
\begin{enumerate}[noitemsep,topsep=0pt]
 \item the vertices of a directed acyclic graph in topologically sorted order,
 \item the edges of a sparse spanning biconnected subgraph of a undirected biconnected graph $G$, and
 \item a strongly connected orientation of a undirected $2$-edge-connected graph $G$.
\end{enumerate}
\end{theorem}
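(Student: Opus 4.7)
My approach is to reduce each of the three items to a single run of the hypothesized DFS (on $G$ or on its transpose), producing the desired output on the fly from the DFS events (vertex discovery, vertex blackening, edge examination). Because the DFS black-box runs in $t(m,n)$ time and $s(m,n)=\Omega(n)$ bits, all three algorithms will match these bounds up to constants.

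For item~(1), I would invoke the DFS on the transpose graph $G^{T}$: since every directed graph in this paper is represented by both its in- and its out-adjacency lists, simulating DFS on $G^{T}$ just swaps the roles of those two lists and is therefore free. I emit each vertex at the moment it blackens. An edge $u \to v$ of the DAG appears as $v \to u$ in $G^{T}$, so by the standard finish-time property $u$ must blacken before $v$ during this DFS, and the emitted sequence is a topological ordering of $G$. For item~(3), I would run the DFS once on $G$ and, the first time each edge is inspected, emit its orientation: tree edges from parent to child and non-tree edges from the currently gray lower endpoint up to its ancestor. Distinguishing the two kinds is just the unvisited/visited/ancestor test that the DFS already performs, so this adds no extra cost. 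Correctness is the standard DFS-based proof of Robbins' theorem for $2$-edge-connected undirected graphs.

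For item~(2), I again emit all $n-1$ tree edges, and for each non-root vertex $v$ I must additionally emit one non-tree edge whose lower endpoint lies in the subtree $T_{v}$ and whose upper endpoint is a proper ancestor of $v$'s parent. The classical DFS characterisation of biconnectivity ensures that such an edge exists for every non-root $v$ iff $G$ is biconnected, and the resulting spanning subgraph has at most $2(n-1)$ edges. The main technical obstacle, and where the $\Omega(n)$-bit hypothesis is essential, is producing these edges without the $\Theta(n\lg n)$ bits that the textbook low-point table would cost. My plan is to maintain just one extra bit per vertex (piggy-backed on the DFS colour array) flagging whether $v$ has already been covered, and, whenever the DFS discovers a back edge $(u,w)$, to charge it to every yet-uncovered vertex strictly between $w$ and $u$ on the current gray path; these vertices can be walked through via the DFS's own parent-navigation primitives, and since each vertex is covered at most once, an amortisation argument keeps the total commitment work within $O(t(m,n))$ time and the extra space within $O(s(m,n))$ bits.
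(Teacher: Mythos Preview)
The paper does not prove this theorem; it is quoted from~\cite{BCR} and used as a black box, so there is no in-paper argument to compare your proposal against.

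That said, your sketches for items~(1) and~(3) are the standard reductions and are fine. For item~(2), however, the amortisation you propose has a real gap. Walking up the gray path from $u$ toward $w$ and stopping at the first already-covered vertex can leave higher vertices uncovered: on the tree path $r\text{--}a\text{--}b\text{--}c\text{--}d$ with back edges $(c,a)$ and $(d,r)$ discovered in that order, processing $(c,a)$ covers $b$, and then processing $(d,r)$ covers $c$ but stops at the already-covered $b$, leaving $a$ uncovered even though $a$ lies strictly between $d$ and $r$. If instead you walk all the way to $w$ on every back edge, the total work is $\sum_{\text{back edges}}\bigl(\mathrm{depth}(u)-\mathrm{depth}(w)\bigr)$, which can be $\Theta(mn)$ and does not stay within $O(t(m,n))$. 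So you still need a mechanism---beyond a single covered/uncovered bit per vertex---that lets you select one covering back edge per vertex in $O(t(m,n))$ total time using only $O(s(m,n))$ extra bits; this is exactly the non-trivial content of the cited result, and your plan does not yet supply it.

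There is also a minor off-by-one mismatch between your stated requirement (``upper endpoint is a proper ancestor of $v$'s parent'') and your covering rule (``strictly between $w$ and $u$''): the latter would wrongly cover the child of $w$ on the path---for which the back edge only reaches its parent, not a proper ancestor of its parent---while it omits $u$ itself, which it should cover whenever the back edge jumps at least two levels.
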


Now plugging the improved DFS algorithm of Theorem \ref{thm:read-only} in the above theorem, we obtain for all the applications an improved (over the classical implementation) space-efficient implementations taking $O(m+n)$ time and $O(n \lg (m/n))$ bits of space. 

\subsection{Space-efficient approximation algorithms using Baker's approach}\label{baker}
In this section we present in-place Logspace Approximation Scheme or LSAS ($(1\pm \epsilon)$ approximation algorithm running in logspace for every $\epsilon> 0$) for a class of MSO-definable optimization problems which are amenable to the Baker's method \cite{Baker94}, also known as \emph{shifting technique}, in locally bounded treewidth graphs. 

There are two main general approaches for designing PTASs for problems on planar graphs. The first approach is based on planar separators~\cite{LiptonT80}. The approximation algorithms resulting from this approach are impractical in general. To address this, Baker~\cite{Baker94} introduced the second approach for PTASs in planar graphs, based on decomposition into overlapping subgraphs of bounded outerplanarity, which are of bounded treewidth. For a general account on these, see~\cite{Demaine2008}.

Baker's method was originally designed to give PTASs for a host of \NP -hard optimization problems on planar graphs like minimum vertex cover, minimum dominating set, maximum independent set etc which are hard to approximate in general graphs. Many of these remain \NP -hard even in planar graphs. Later the technique was generalized to a broader class of graphs called graphs of bounded local treewidth~\cite{Eppstein00, DemaineH04}. For a vertex $v$ of a graph $G$ and integer $k \ge 0$, by $G^k_v$ we denote the subgraph of $G$ induced by vertices within distance at most $k$ from $v$ in $G$. A class of graphs $ \mathcal{G}$ is of bounded local treewidth if there exists function $f$ such that for every graph $G \in \mathcal{G}$ and every vertex $v$ of $G$, treewidth$(G^k_v) \le f(k)$.

\subsubsection{Baker's Algorithm}
The main two computational bottlenecks in Baker’s approach are 
\begin{enumerate}[noitemsep,topsep=0pt]
\item decomposing the graph into bounded treewidth graphs, and 
\item solving the optimization problem on bounded tree width graphs optimally and combining these solutions. 
\end{enumerate}

Step (1) requires performing BFS on the graph $G$ and considering the induced subgraphs $G_{i,j}$ between layers $ki +j$ and $k(i+1) +j$ (which are of treewidth $O(k)$) for $i \ge 0$ and offset $0 \le j \le k-1$ by deleting (or including in both the adjacent slices) the vertices/edges in every $k= O(1/\epsilon)$-th BFS layer (particular details differ depending on the problem). By choosing the right offset, we can make sure this affects the optimum solution at most by $\epsilon$ factor. For Step (2), given an MSO-definable problem, using Boadlander's~\cite{Bodlaender96} and Courcelle’s theorem~\cite{CourcelleM93} on bounded treewidth graphs we can get the optimal solution. Baker's approach is highly efficient in terms of time complexity as this gives linear time approximation schemes but the scenario changes when considering the efficiency of the approach in terms of the space needed. 

Though we can use the result of~\cite{ElberfeldJT10} for the second part of the algorithm which proves logspace analogue of Boadlander's and Courcelle’s theorem, in the the first part we need to compute distance. BFS can be performed in \NL\ in general graphs and in $\UL \cap \coUL$ for planar graphs~\cite{TW10}. Since $\UL \subseteq \NL$ is not known (or even believed) to be inside \Log, the \NL\ bound ($\UL \cap \coUL$ for planar graphs) for Baker's algorithm is the best known in ROM~\cite{DK}. Recently~\cite{DKM} has given an LSAS for maximum matching in planar graphs and some more sparse graph classes (a problem amenable to Baker's method) but a logspace analogue of Baker's algorithm in general is not yet known. 

Since in our in-place models (both \rotated\ and \implicit\ ) we can overcome this hurdle by computing distance in \Log, we obtain the following result. Notice that though our in-place algorithms change the ordering of the vertices in an adjacency list (or array) the graph remains the same and so the distances between vertices.

\begin{theorem}
In both the \rotated\ and the \implicit\ model MSO-definable optimization problems which are amenable to Baker's method has an LSAS in locally bounded treewidth graphs. 
\end{theorem}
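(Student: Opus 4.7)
The plan is to adapt Baker's shifting technique so that every step fits into $O(\lg n)$ bits of workspace in both the \rotated\ and the \implicit\ models. Following the standard recipe, I would (i) compute BFS layers from an arbitrary source $s$, (ii) for each offset $j\in\{0,1,\dots,k-1\}$ with $k=\Theta(1/\epsilon)$, consider the decomposition of $G$ into the induced slabs $G_{i,j}$ lying between BFS layers $ki+j$ and $k(i+1)+j$, each of treewidth at most $f(k)$ by the locally bounded treewidth assumption, (iii) solve the given MSO-definable problem optimally on each slab, and (iv) combine the slab solutions and take the best offset $j$.

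For step~(i), I invoke Theorem~\ref{bfsrotate}(b) in the \rotated\ model and Theorem~\ref{bfsimplicit} in the \implicit\ model; both output each vertex together with its BFS depth using only $O(\lg n)$ bits. Since our models never alter the underlying graph (only the order of elements in adjacency lists/arrays), the BFS depths are a well-defined function of the input, and any later routine can re-query the depth of a specific vertex by a bounded re-invocation of BFS from $s$. Thus the predicate ``vertex $v$ lies in slab $G_{i,j}$'' and its edge analogue are computable in $O(\lg n)$ bits, giving an implicit, logspace-computable presentation of $G_{i,j}$.

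For step~(iii), I plug in the Elberfeld--Jakoby--Tantau logspace version of Bodlaender's and Courcelle's theorems~\cite{ElberfeldJT10}: any MSO-definable optimization problem restricted to graphs of treewidth at most $w$ is solvable in $O(\lg n)$ space (with the hidden constant depending on $w$ and on the MSO formula). The point that has to be checked is that the EJT procedure can be composed with our logspace slab-extraction predicate and still run in $O(\lg n)$ space; this follows from the standard logspace composition lemma once one observes that each slab is an induced subgraph of $G$ and hence of polynomial size, so the intermediate ``virtual graph'' causes no blow-up. Producing an explicit solution (not only its value) is handled by the standard EJT self-reduction, which also stays in logspace.

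For step~(iv), for a fixed offset $j$ the slabs share only their boundary layers, so their optimum values can be combined by iterating $i$ from $0$ to $\ell/k$ while maintaining a single running counter, correcting for the doubly counted boundary in the usual Baker way (adding or subtracting boundary contributions depending on whether the problem is a minimization or a maximization). Finally we iterate over the $O(1/\epsilon)$ offsets and retain the best one, which requires only a running optimum and an argmax index. The main obstacle I expect is the clean interface in step~(iii): verifying that the EJT algorithm truly accepts a graph presented only by a logspace slab predicate (rather than an explicit listing) and that composition still yields $O(\lg n)$ workspace; everything else is bookkeeping on top of our in-place BFS.
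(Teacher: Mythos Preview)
Your proposal is correct and follows essentially the same approach as the paper: use the in-place $O(\lg n)$-bit BFS (Theorems~\ref{bfsrotate}(b) and~\ref{bfsimplicit}) to obtain the distance layers, then invoke the Elberfeld--Jakoby--Tantau logspace Bodlaender/Courcelle machinery on each bounded-treewidth slab, and combine over offsets. Your write-up is in fact more explicit than the paper's own discussion, which simply observes that the sole non-logspace ingredient in Baker's recipe---distance computation---becomes logspace in the two models, and that the adjacency-list reorderings leave the graph (and hence all distances) unchanged; you additionally spell out the logspace-composition argument needed to feed a virtually presented slab $G_{i,j}$ into the EJT procedure, which the paper leaves implicit.
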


\subsection{Solving NP-hard problems in in-place models}\label{nphardlogspace}
We show how to solve some NP-hard graph problems using logspace and exponential time in the \rotated{} and \implicit{} models. In particular, this implies that problems such as vertex cover and dominating set can be solved in exponential time using $O(\lg n)$ bits  in both the models. In constrast, note that, no NP-hard problem can be solved in the ROM model using $O(\lg n)$ bits unless {\sf P=NP}.

Similar to Fomin et al.~\cite{FominGLS16}, we define a class of graph problems in {\sf NP} which we call {\it graph subset problems} where the goal is to find a subset of vertices satisfying some property. We show a meta theorem showing that a restricted class of graph subset problems that are in {\sf NP} admit log-space exponential algorithms in the \rotated{} and \implicit{} models. 

Given a graph $G$ with its adjacency list, we encode a subset of the vertices as follows. For every vertex in the subset, we bring in the minimum labelled vertex among its neighbors to the front of the list, and for others, we keep a vertex with a higher label (than the minimum) at the front of the list. So it takes a linear time to check whether a vertex is in the subset. The algorithm enumerates all subsets (by this encoding) and simply verifies using the {\sf NP} algorithm whether that subset satisfies the required property until it finds a subset satisfying the property or it has enumerated all the subsets. By a standard theorem in complexity theory~\cite{AroraB}, every problem in {\sf NP} is actually verifiable by a log-space ROM and hence the overall space taken by our algorithm is only logarithmic.
Note that our algorithm requires that the adjacency list of any vertex has at least two values, i.e. that the degree of any vertex is at 
least two. Thus we have
\begin{theorem}
Any graph subset problem in {\sf NP} can be solved using $O(\lg n)$ bits of extra space (and exponential time) in the \rotated{} and \implicit{} models in graphs $G$ having minimum degree $2$.
\end{theorem}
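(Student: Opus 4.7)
The plan is to realize the two ingredients separately: (i) a log-space encoding of a candidate subset $S \subseteq V$ into the adjacency structure itself, usable in both the \rotated\ and \implicit\ models, and (ii) a log-space verifier that, given oracle access to $S$, checks the defining property of the graph subset problem. Since every language in {\sf NP} admits a verifier running in deterministic log-space when the witness is presented on a separate read-only tape (a classical consequence of the fact that {\sf NP} $\subseteq$ {\sf NTISP}$(\mathrm{poly},\log)$ with advice, or equivalently the Cook--Levin reduction to 3SAT followed by a log-space 3SAT verifier), combining these two ingredients with an outer enumeration over all $2^n$ subsets gives the claim.

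First I would fix the encoding. For each vertex $v$, let $\mu(v)$ denote the minimum-labelled neighbor of $v$; this is well-defined because $\deg(v) \ge 2$. In a single preprocessing pass (log-space, polynomial time) I interpret the current state as follows: $v \in S$ if and only if the front element of $v$'s adjacency list equals $\mu(v)$. To test membership of $v$ in $S$, I scan the list once to compute $\mu(v)$ and compare with the head, taking $O(\deg(v))$ time and $O(\lg n)$ bits. To flip the bit of $v$, in the \rotated\ model I perform a single rotation (moving $\mu(v)$ in or out of the front); in the \implicit\ model I swap the front with an appropriate element. Crucially the multiset of adjacency lists is preserved, so the underlying graph $G$ is unchanged and $\mu(v)$ itself does not change.

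Next I would drive the outer enumeration. Label the vertices $1,\dots,n$ and treat the encoded vector $(\mathbf{1}[1\in S],\dots,\mathbf{1}[n\in S])$ as a binary counter. Incrementing the counter from $0$ to $2^n-1$ requires only a standard log-space ripple, and at each step the only state I keep is the index of the bit being flipped, which is $O(\lg n)$ bits. For each enumerated $S$, I invoke the log-space {\sf NP} verifier for the target problem, supplying membership queries in $S$ via the encoding above. Since each query costs only $O(\deg(v)) = O(n)$ time and the verifier itself uses $O(\lg n)$ bits, the total workspace remains $O(\lg n)$ bits. When the verifier accepts I output $S$ by scanning all vertices once; otherwise I increment and continue, halting with ``no solution'' after $2^n$ trials.

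The only delicate point, and the one I would treat most carefully, is the interface between the log-space {\sf NP} verifier and the implicit encoding of $S$. The verifier is specified to read its witness in ROM, whereas here ``reading bit $i$ of the witness'' is implemented by a log-space subroutine that performs $O(\deg(i))$ work on the graph without modifying it. This is legitimate because membership testing only reads the adjacency list of $i$; no rotations or swaps are performed during verification, so the encoded $S$ is invariant throughout the call. The $\deg(v)\ge 2$ hypothesis is exactly what makes $\mu(v)$ and the head-versus-$\mu(v)$ distinction meaningful on every vertex, which is why the statement restricts to minimum degree $2$. The total running time is $2^n \cdot \mathrm{poly}(n)$ and the extra space is $O(\lg n)$ bits, as required.
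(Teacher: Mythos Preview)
Your proposal is correct and follows essentially the same approach as the paper: encode membership of each vertex $v$ in the candidate subset $S$ by whether the minimum-labelled neighbor of $v$ is at the front of $v$'s adjacency list, enumerate all $2^n$ subsets via this encoding, and invoke the standard fact that every {\sf NP} language has a deterministic log-space verifier. One small slip: in the \rotated\ model, flipping the bit of $v$ from $0$ to $1$ may require up to $\deg(v)$ rotations (not a single one) to bring $\mu(v)$ to the front, but this does not affect either the space bound or the overall argument.
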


\noindent
{\bf Remark 1. } We remark that the above idea can work for other graph problems that are not necessarily subset problems. For example, for testing hamiltonicity, we can simply explore all neighbors of a vertex (starting at the smallest labelled neighbor so we know when we have explored them all) in a systematic fashion simply encoding them into the adjacency list by moving the current neighbor to the front of the list, and test whether together they form a cycle of length $n$.

If the graphs have larger minimum degree (say at least $2 \lg n$), we can even encode pointers in each adjacency list and using that we can even test for graph isomorphism in logarithmic space by explicitly maintaining the vertex mapping by these encoding pointers.

\noindent
{\bf Remark 2. } The minimum degree $2$ restriction in the above theorem is not a serious restriction as for many problems (like vertex cover, dominating set and traveling salesperson problem), there are preprocessing routines that can handle (not necessarily in our model) and eliminate degree $1$ vertices.

\section{Concluding remarks}\label{end}
Our initial motivation was to get around the limitations of ROM to  
obtain a reasonable model for graphs in which we can obtain space  
efficient algorithms. We achieved that by introducing two new frameworks and obtained efficient (of the order of $O(n^3 \lg n)$) algorithms using $O(\lg n)$ bits of space for fundamental graph search procedures. We also discussed various applications of our DFS/BFS results, and it is not surprising that many simple corollaries would follow as DFS/BFS being the backbone of so many graph algorithms. We showed that some of these results also translate to improved space efficient algorithms in ROM (by simulating the \rotated\ model algorithms in ROM with one pointer per list). With some effort, we can obtain log space algorithm for minimum spanning tree. These results can be contrasted with the state of the art results in ROM that take almost linear bits for some of these problems other than having large runtime bounds. All our algorithms are conceptually simple, and as they don't use any heavy data structures, we believe that they are also practical to implement. Still, there are plenty of algorithmic graph problems to be studied in these models. We believe that our work is the first step towards this and will inspire further investigation into designing in-place algorithms for other graph problems. One future direction would be to improve the running time of our algorithms to make them more practical.

Surprisingly we could design log-space algorithm for some {\sf P}-complete problems, and so it is important to understand the power of our models.  Towards that we discovered that we can even obtain log-space algorithms for some NP-hard graph problems. More specifically, we defined {\it graph subset problems} and obtained log-space exponential time algorithms for problems belonging to this class. One interesting future direction would be to determine the exact computational power of these models along with exploring the horizon of interesting complexity theoretic consequences of problems in these models.


\begin{thebibliography}{10}

\bibitem{AggarwalA88}
A.~Aggarwal and R.~J. Anderson.
\newblock A random {NC} algorithm for depth first search.
\newblock {\em Combinatorica}, 8(1):1--12, 1988.

\bibitem{AggarwalAK90}
A.~Aggarwal, R.~J. Anderson, and M.~Kao.
\newblock Parallel depth-first search in general directed graphs.
\newblock {\em {SIAM} J. Comput.}, 19(2):397--409, 1990.

\bibitem{AlonMS99}
N.~Alon, Y.~Matias, and M.~Szegedy.
\newblock The space complexity of approximating the frequency moments.
\newblock {\em J. Comput. Syst. Sci.}, 58(1):137--147, 1999.

\bibitem{AndersonM87}
R.~J. Anderson and E.~W. Mayr.
\newblock Parallelism and the maximal path problem.
\newblock {\em Inf. Process. Lett.}, 24(2):121--126, 1987.

\bibitem{AroraB}
S.~Arora and B.~Barak.
\newblock {\em Computational Complexity - {A} Modern Approach}.
\newblock Cambridge University Press, 2009.

\bibitem{AsanoBBKMRS14}
T.~Asano, K.~Buchin, M.~Buchin, M.Korman, W.~Mulzer, G.~Rote, and A.~Schulz.
\newblock Reprint of: Memory-constrained algorithms for simple polygons.
\newblock {\em Comput. Geom.}, 47(3):469--479, 2014.

\bibitem{AsanoIKKOOSTU14}
T.~Asano, T.~Izumi, M.~Kiyomi, M.~Konagaya, H.~Ono, Y.~Otachi, P.~Schweitzer,
  J.~Tarui, and R.~Uehara.
\newblock Depth-first search using $\mathrm{O}(n)$ bits.
\newblock In {\em 25th ISAAC}, pages 553--564, 2014.

\bibitem{AsanoKNW14}
T.~Asano, D.~G. Kirkpatrick, K.~Nakagawa, and O.~Watanabe.
\newblock {\~{O}}({\(\surd\)}n)-space and polynomial-time algorithm for planar
  directed graph reachability.
\newblock In {\em 39th MFCS LNCS 8634}, pages 45--56, 2014.

\bibitem{AsanoMRW11}
T.~Asano, W.~Mulzer, G.~Rote, and Y.~Wang.
\newblock Constant-work-space algorithms for geometric problems.
\newblock {\em JoCG}, 2(1):46--68, 2011.

\bibitem{Baker94}
B.~S. Baker.
\newblock Approximation algorithms for np-complete problems on planar graphs.
\newblock {\em J. {ACM}}, 41(1):153--180, 1994.

\bibitem{BCR}
N.~Banerjee, S.~Chakraborty, and V.~Raman.
\newblock Improved space efficient algorithms for $\mathrm{BFS,DFS}$ and
  applications.
\newblock In {\em 22nd COCOON}, 2016.

\bibitem{BanerjeeCRRS2015}
N.~Banerjee, S.~Chakraborty, V.~Raman, S.~Roy, and S.~Saurabh.
\newblock Time-space tradeoffs for dynamic programming in trees and bounded
  treewidth graphs.
\newblock In {\em 21st COCOON}, volume 9198, pages 349--360. springer, LNCS,
  2015.

\bibitem{BarbaKLSS15}
L.~Barba, M.~Korman, S.~Langerman, K.~Sadakane, and R.~I. Silveira.
\newblock Space-time trade-offs for stack-based algorithms.
\newblock {\em Algorithmica}, 72(4):1097--1129, 2015.

\bibitem{BarnesBRS98}
G.~Barnes, J.~Buss, W.~Ruzzo, and B.~Schieber.
\newblock A sublinear space, polynomial time algorithm for directed
  \emph{s}-\emph{t} connectivity.
\newblock {\em {SIAM} J. Comput.}, 27(5):1273--1282, 1998.

\bibitem{Beame91}
Paul Beame.
\newblock A general sequential time-space tradeoff for finding unique elements.
\newblock {\em {SIAM} J. Comput.}, 20(2):270--277, 1991.

\bibitem{Bodlaender96}
H.~L. Bodlaender.
\newblock A linear-time algorithm for finding tree-decompositions of small
  treewidth.
\newblock {\em {SIAM} J. Comput.}, 25(6):1305--1317, 1996.

\bibitem{BorodinC82}
A.~Borodin and S.~A. Cook.
\newblock A time-space tradeoff for sorting on a general sequential model of
  computation.
\newblock {\em {SIAM} J. Comput.}, 11(2):287--297, 1982.

\bibitem{BorodinFKLT81}
A.~Borodin, M.~J. Fischer, D.~G. Kirkpatrick, N.~A. Lynch, and M.~Tompa.
\newblock A time-space tradeoff for sorting on non-oblivious machines.
\newblock {\em J. Comput. Syst. Sci.}, 22(3):351--364, 1981.

\bibitem{BronnimannCC04}
H.~Br{\"{o}}nnimann, T.~M. Chan, and E.~Y. Chen.
\newblock Towards in-place geometric algorithms and data structures.
\newblock In {\em Proceedings of the 20th {ACM} Symposium on Computational
  Geometry, Brooklyn, New York, USA, June 8-11, 2004}, pages 239--246, 2004.

\bibitem{BuhrmanCKLS14}
H.~Buhrman, R.~Cleve, M.~Kouck{\'{y}}, B.~Loff, and F.~Speelman.
\newblock Computing with a full memory: catalytic space.
\newblock In {\em Symposium on Theory of Computing, {STOC} 2014, New York, NY,
  USA, May 31 - June 03, 2014}, pages 857--866, 2014.

\bibitem{BuhrmanKLS16}
H.~Buhrman, M.l Kouck{\'{y}}, B.~Loff, and F.~Speelman.
\newblock Catalytic space: Non-determinism and hierarchy.
\newblock In {\em 33rd {STACS} 2016, February 17-20, 2016, Orl{\'{e}}ans,
  France}, pages 24:1--24:13, 2016.

\bibitem{ChakrabortyPTVY14}
D.~Chakraborty, A.~Pavan, R.~Tewari, N.~V. Vinodchandran, and L.~Yang.
\newblock New time-space upperbounds for directed reachability in high-genus
  and h-minor-free graphs.
\newblock In {\em FSTTCS}, pages 585--595, 2014.

\bibitem{ChakrabortyRS16}
S.~Chakraborty, V.~Raman, and S.~R. Satti.
\newblock Biconnectivity, chain decomposition and st-numbering using
  $\mathrm{O}(n)$ bits.
\newblock In {\em 27th ISAAC}, pages 22:1--22:13, 2016.

\bibitem{CRS17}
S.~Chakraborty, V.~Raman, and S.~R. Satti.
\newblock Biconnectivity, st-numbering and other applications of {DFS} using
  $\mathrm{O}(n)$ bits.
\newblock {\em J. Comput. Syst. Sci.}, 90:63--79, 2017.

\bibitem{ChakrabortyS17}
S.~Chakraborty and S.~R. Satti.
\newblock Space-efficient algorithms for maximum cardinality search, stack bfs,
  queue {BFS} and applications.
\newblock In {\em Computing and Combinatorics - 23rd International Conference,
  {COCOON} 2017, Hong Kong, China, August 3-5, 2017, Proceedings}, pages
  87--98, 2017.

\bibitem{ChanC07}
T.~M. Chan and E.~Y. Chen.
\newblock Multi-pass geometric algorithms.
\newblock {\em Discrete {\&} Computational Geometry}, 37(1):79--102, 2007.

\bibitem{ChanMR13}
T.~M. Chan, J.~I. Munro, and V.~Raman.
\newblock Faster, space-efficient selection algorithms in read-only memory for
  integers.
\newblock In {\em Algorithms and Computation - 24th International Symposium,
  {ISAAC} 2013, Hong Kong, China, December 16-18, 2013, Proceedings}, pages
  405--412, 2013.

\bibitem{ChanMR14}
T.~M. Chan, J.~I. Munro, and V.~Raman.
\newblock Selection and sorting in the "restore" model.
\newblock In {\em 25th-SODA}, pages 995--1004, 2014.

\bibitem{CookR80}
S.~A. Cook and C.~Rackoff.
\newblock Space lower bounds for maze threadability on restricted machines.
\newblock {\em {SIAM} J. Comput.}, 9(3):636--652, 1980.

\bibitem{CLRS}
T.~H. Cormen, C.~E. Leiserson, R.~L. Rivest, and C.~Stein.
\newblock {\em Introduction to Algorithms {(3.} ed.)}.
\newblock {MIT} Press, 2009.

\bibitem{CourcelleM93}
B.~Courcelle and M.~Mosbah.
\newblock Monadic second-order evaluations on tree-decomposable graphs.
\newblock {\em Theor. Comput. Sci.}, 109(1{\&}2):49--82, 1993.

\bibitem{DarwishE14}
O.~Darwish and A.~Elmasry.
\newblock Optimal time-space tradeoff for the 2d convex-hull problem.
\newblock In {\em 22th ESA}, pages 284--295, 2014.

\bibitem{DK}
S.~Datta and R.~Kulkarni.
\newblock Space complexity of optimization problems in planar graphs.
\newblock In {\em 11th Annual Conference, {TAMC} 2014, Chennai, India, April
  11-13, 2014. Proceedings}, pages 300--311, 2014.

\bibitem{DKM}
S.~Datta, R.~Kulkarni, and A.~Mukherjee.
\newblock Space-efficient approximation scheme for maximum matching in sparse
  graphs.
\newblock In {\em 41st {MFCS} 2016, August 22-26, 2016 - Krak{\'{o}}w, Poland},
  pages 28:1--28:12, 2016.

\bibitem{DattaLNTW09}
S.~Datta, N.~Limaye, P.~Nimbhorkar, T.~Thierauf, and F.~Wagner.
\newblock Planar graph isomorphism is in log-space.
\newblock In {\em 24th CCC}, pages 203--214, 2009.

\bibitem{DemaineH04}
E.~D. Demaine and M.~Taghi Hajiaghayi.
\newblock Equivalence of local treewidth and linear local treewidth and its
  algorithmic applications.
\newblock In {\em Fifteenth {SODA} 2004, New Orleans, Louisiana, USA, January
  11-14, 2004}, pages 840--849, 2004.

\bibitem{Demaine2008}
E.~D. Demaine and M.~Taghi Hajiaghayi.
\newblock {\em Approximation Schemes for Planar Graph Problems}, pages 1--99.
\newblock Springer US, Boston, MA, 2008.

\bibitem{DodisPT10}
Y.~Dodis, M.~Patrascu, and M.~Thorup.
\newblock Changing base without losing space.
\newblock In {\em Proceedings of the 42nd {ACM} Symposium on Theory of
  Computing {(STOC)}}, pages 593--602, 2010.

\bibitem{EdmondsPA99}
J.~Edmonds, C.~K. Poon, and D.~Achlioptas.
\newblock Tight lower bounds for st-connectivity on the {NNJAG} model.
\newblock {\em {SIAM} J. Comput.}, 28(6):2257--2284, 1999.

\bibitem{ElberfeldJT10}
M.~Elberfeld, A.~Jakoby, and T.~Tantau.
\newblock Logspace versions of the theorems of bodlaender and courcelle.
\newblock In {\em 51th FOCS}, pages 143--152, 2010.

\bibitem{ElberfeldK14}
M.~Elberfeld and K.~Kawarabayashi.
\newblock Embedding and canonizing graphs of bounded genus in logspace.
\newblock In {\em Symposium on Theory of Computing, {STOC} 2014, New York, NY,
  USA, May 31 - June 03, 2014}, pages 383--392, 2014.

\bibitem{ElberfeldS16}
M.~Elberfeld and P.~Schweitzer.
\newblock Canonizing graphs of bounded tree width in logspace.
\newblock In {\em 33rd Symposium on Theoretical Aspects of Computer Science,
  {STACS} 2016, February 17-20, 2016, Orl{\'{e}}ans, France}, pages
  32:1--32:14, 2016.

\bibitem{Elmasry10}
A.~Elmasry.
\newblock Why depth-first search efficiently identifies two and three-connected
  graphs.
\newblock In {\em 21st ISAAC}, pages 375--386, 2010.

\bibitem{ElmasryHK15}
A.~Elmasry, T.~Hagerup, and F.~Kammer.
\newblock Space-efficient basic graph algorithms.
\newblock In {\em 32nd STACS}, pages 288--301, 2015.

\bibitem{ElmasryJKS14}
A.~Elmasry, D.~D. Juhl, J.~Katajainen, and S.~R. Satti.
\newblock Selection from read-only memory with limited workspace.
\newblock {\em Theor. Comput. Sci.}, 554:64--73, 2014.

\bibitem{Eppstein00}
D.~Eppstein.
\newblock Diameter and treewidth in minor-closed graph families.
\newblock {\em Algorithmica}, 27(3):275--291, 2000.

\bibitem{FeigenbaumKMSZ05}
J.~Feigenbaum, S.~Kannan, A.~McGregor, S.~Suri, and J.~Zhang.
\newblock On graph problems in a semi-streaming model.
\newblock {\em Theor. Comput. Sci.}, 348(2-3):207--216, 2005.

\bibitem{FominGLS16}
F.~V. Fomin, S.~Gaspers, D.~Lokshtanov, and S.~Saurabh.
\newblock Exact algorithms via monotone local search.
\newblock In {\em Proceedings of the 48th Annual {ACM} {SIGACT} Symposium on
  Theory of Computing, {STOC} 2016, Cambridge, MA, USA, June 18-21, 2016},
  pages 764--775, 2016.

\bibitem{FranceschiniM06}
G.~Franceschini and J.~Ian Munro.
\newblock Implicit dictionaries with \emph{O}(1) modifications per update and
  fast search.
\newblock In {\em Proceedings of the Seventeenth Annual {ACM-SIAM} Symposium on
  Discrete Algorithms {(SODA)}}, pages 404--413, 2006.

\bibitem{FranceschiniM07}
G.~Franceschini and S.~Muthukrishnan.
\newblock In-place suffix sorting.
\newblock In {\em Automata, Languages and Programming, 34th International
  Colloquium, {ICALP} 2007, Wroclaw, Poland, July 9-13, 2007, Proceedings},
  pages 533--545, 2007.

\bibitem{FranceschiniMP07}
G.~Franceschini, S.~Muthukrishnan, and M.~Patrascu.
\newblock Radix sorting with no extra space.
\newblock In {\em Algorithms - {ESA} 2007, 15th Annual European Symposium,
  Eilat, Israel, October 8-10, 2007, Proceedings}, pages 194--205, 2007.

\bibitem{Frederickson87}
G.~N. Frederickson.
\newblock Upper bounds for time-space trade-offs in sorting and selection.
\newblock {\em J. Comput. Syst. Sci.}, 34(1):19--26, 1987.

\bibitem{HagerupK16}
T.~Hagerup and F.~Kammer.
\newblock Succinct choice dictionaries.
\newblock {\em CoRR}, abs/1604.06058, 2016.

\bibitem{KammerKL16}
F.~Kammer, D.~Kratsch, and M.~Laudahn.
\newblock Space-efficient biconnected components and recognition of outerplanar
  graphs.
\newblock In {\em 41st MFCS}, 2016.

\bibitem{Knuth73a}
Donald~E. Knuth.
\newblock {\em The Art of Computer Programming, Volume {I:} Fundamental
  Algorithms, 2nd Edition}.
\newblock Addison-Wesley, 1973.

\bibitem{Koucky16}
M.~Kouck{\'{y}}.
\newblock Catalytic computation.
\newblock {\em Bulletin of the {EATCS}}, 118, 2016.

\bibitem{LaiW88}
T.~W. Lai and D.~Wood.
\newblock Implicit selection.
\newblock In {\em {SWAT} 88, 1st Scandinavian Workshop on Algorithm Theory,
  Halmstad, Sweden, July 5-8, 1988, Proceedings}, pages 14--23, 1988.

\bibitem{LiptonT80}
R.~J. Lipton and R.~E. Tarjan.
\newblock Applications of a planar separator theorem.
\newblock {\em {SIAM} J. Comput.}, 9(3):615--627, 1980.

\bibitem{McGregor14}
A.~McGregor.
\newblock Graph stream algorithms: a survey.
\newblock {\em {SIGMOD} Record}, 43(1):9--20, 2014.

\bibitem{Munro96}
J.~I. Munro.
\newblock Tables.
\newblock In {\em FSTTCS}, pages 37--42, 1996.

\bibitem{MunroP80}
J.~I. Munro and M.~Paterson.
\newblock Selection and sorting with limited storage.
\newblock {\em Theor. Comput. Sci.}, 12:315--323, 1980.

\bibitem{MunroR96}
J.~I. Munro and V.~Raman.
\newblock Selection from read-only memory and sorting with minimum data
  movement.
\newblock {\em Theor. Comput. Sci.}, 165(2):311--323, 1996.

\bibitem{Munro86}
J.~Ian Munro.
\newblock An implicit data structure supporting insertion, deletion, and search
  in {O}(log{\({^2}\)} n) time.
\newblock {\em J. Comput. Syst. Sci.}, 33(1):66--74, 1986.

\bibitem{PagterR98}
J.~Pagter and T.~Rauhe.
\newblock Optimal time-space trade-offs for sorting.
\newblock In {\em 39th Annual Symposium on Foundations of Computer Science,
  {FOCS} '98, November 8-11, 1998, Palo Alto, California, {USA}}, pages
  264--268, 1998.

\bibitem{Reif84}
J.~H. Reif.
\newblock Symmetric complementation.
\newblock {\em J. {ACM}}, 31(2):401--421, 1984.

\bibitem{Reif85}
J.~H. Reif.
\newblock Depth-first search is inherently sequential.
\newblock {\em Inf. Process. Lett.}, 20(5):229--234, 1985.

\bibitem{Reingold08}
O.~Reingold.
\newblock Undirected connectivity in log-space.
\newblock {\em J. {ACM}}, 55(4), 2008.

\bibitem{Tantau07}
T.~Tantau.
\newblock Logspace optimization problems and their approximability properties.
\newblock {\em Theory Comput. Syst.}, 41(2):327--350, 2007.

\bibitem{TW10}
T.~Thierauf and F.~Wagner.
\newblock The isomorphism problem for planar 3-connected graphs is in
  unambiguous logspace.
\newblock {\em Theory Comput. Syst.}, 47(3):655--673, 2010.

\bibitem{Tompa82}
M.~Tompa.
\newblock Two familiar transitive closure algorithms which admit no polynomial
  time, sublinear space implementations.
\newblock {\em {SIAM} J. Comput.}, 11(1):130--137, 1982.

\end{thebibliography}

\end{document}